
\documentclass[final]{cvpr}

\usepackage{times}
\usepackage{epsfig}
\usepackage{graphicx}
\usepackage{amsmath}
\usepackage{amssymb,amsfonts}

\usepackage{appendix}
\usepackage{subfig}
\usepackage[font=bf]{caption}
\usepackage{amsthm}
\usepackage{balance}
\usepackage{xspace} 

\usepackage[font=bf]{caption}
\usepackage{algorithmic}
\usepackage[ruled,linesnumbered]{algorithm2e}

\newtheorem{theorem}{Theorem}
\newtheorem{lemma}{Lemma}

\newtheorem{corollary}{Corollary}
    
\newcommand{\argmax}{\operatornamewithlimits{argmax}}

\newcommand{\lnorm}[1]
    {\ensuremath{\left\Vert#1\right\Vert}}
\newcommand{\myparatight}[1]{\smallskip\noindent{\bf {#1}:}~}
\newcommand{\bm}[1]{\mathbf{#1}}


\usepackage[pagebackref=true,breaklinks=true,colorlinks,bookmarks=false]{hyperref}



\begin{document}

\title{PointGuard: Provably Robust 3D Point Cloud Classification}

\author{Hongbin Liu\thanks{The first two authors made equal contributions.} \qquad Jinyuan Jia\footnotemark[1] \qquad Neil Zhenqiang Gong\\
Duke University\\

{\tt\small \{hongbin.liu, jinyuan.jia, neil.gong\}@duke.edu}
}

\maketitle
\begin{abstract}
3D point cloud classification has many safety-critical applications such as autonomous driving and robotic grasping. However, several studies showed that it is vulnerable to adversarial attacks. In particular,  an attacker can make a classifier predict an incorrect label for a 3D point cloud via carefully modifying, adding, and/or deleting a small number of its points. Randomized smoothing is state-of-the-art technique to build certifiably robust 2D image classifiers. However, when applied to 3D point cloud classification, 
  randomized smoothing  can only certify robustness against adversarially {modified} points. 

In this work, we propose PointGuard, the first defense that has provable robustness guarantees against adversarially modified, added, and/or deleted points. Specifically, given a 3D point cloud and an arbitrary point cloud classifier, our PointGuard first creates multiple subsampled point clouds,  each of which contains a random subset of the points in the original point cloud; then our PointGuard predicts the label of the original point cloud as the majority vote among the labels of the subsampled point clouds predicted by the point cloud classifier. Our first major theoretical contribution is that we show PointGuard provably predicts the same label for a 3D point cloud when the number of adversarially modified, added, and/or deleted points is bounded. Our second major theoretical contribution is that we prove the tightness of our derived bound when no assumptions on the point cloud classifier are made. Moreover, we design an efficient algorithm to compute our certified robustness guarantees. We also empirically evaluate PointGuard on ModelNet40 and ScanNet benchmark datasets. 

\end{abstract}
\section{Introduction}

3D point cloud, which comprises a set of 3D points, is a crucial data structure in modelling a 3D shape or object. In recent years, we have witnessed an increasing interest in 3D point cloud classification~\cite{qi2017pointnet,li2018pointcnn,qi2017pointnet++,wang2019dynamic} because it has many  applications, such as robotic grasping~\cite{varley2017shape},  autonomous driving~\cite{chen2017multi,yue2018lidar}, etc.. However, multiple recent studies~\cite{xiang2019generating,wicker2019robustness,zheng2019pointcloud,zhou2020lg,yang2019adversarial,ma2020efficient} showed that 3D point cloud classifiers are vulnerable to adversarial attacks. In particular, given a 3D point cloud, an attacker can carefully modify, add, and/or delete a small number of points such that a 3D point cloud classifier predicts an incorrect label for it. We can categorize these attacks into four types based on the capability of an attacker: point \emph{modification}, \emph{addition}, \emph{deletion}, and \emph{perturbation} attacks. In particular, in a point modification/addition/deletion attack~\cite{xiang2019generating, wicker2019robustness,zheng2019pointcloud,yang2019adversarial}, an attacker can only modify/add/delete points in a 3D point cloud. An attacker, however, can apply one or more of the above three operations, i.e., modification, addition, and deletion, to a 3D point cloud in a point perturbation attack. Figure~\ref{fig:attack} illustrates the point modification, addition, and deletion attacks. These adversarial attacks pose severe security concerns to point cloud classification in safety-critical applications. 

\begin{figure}
    \centering
    \includegraphics[width=0.45\textwidth]{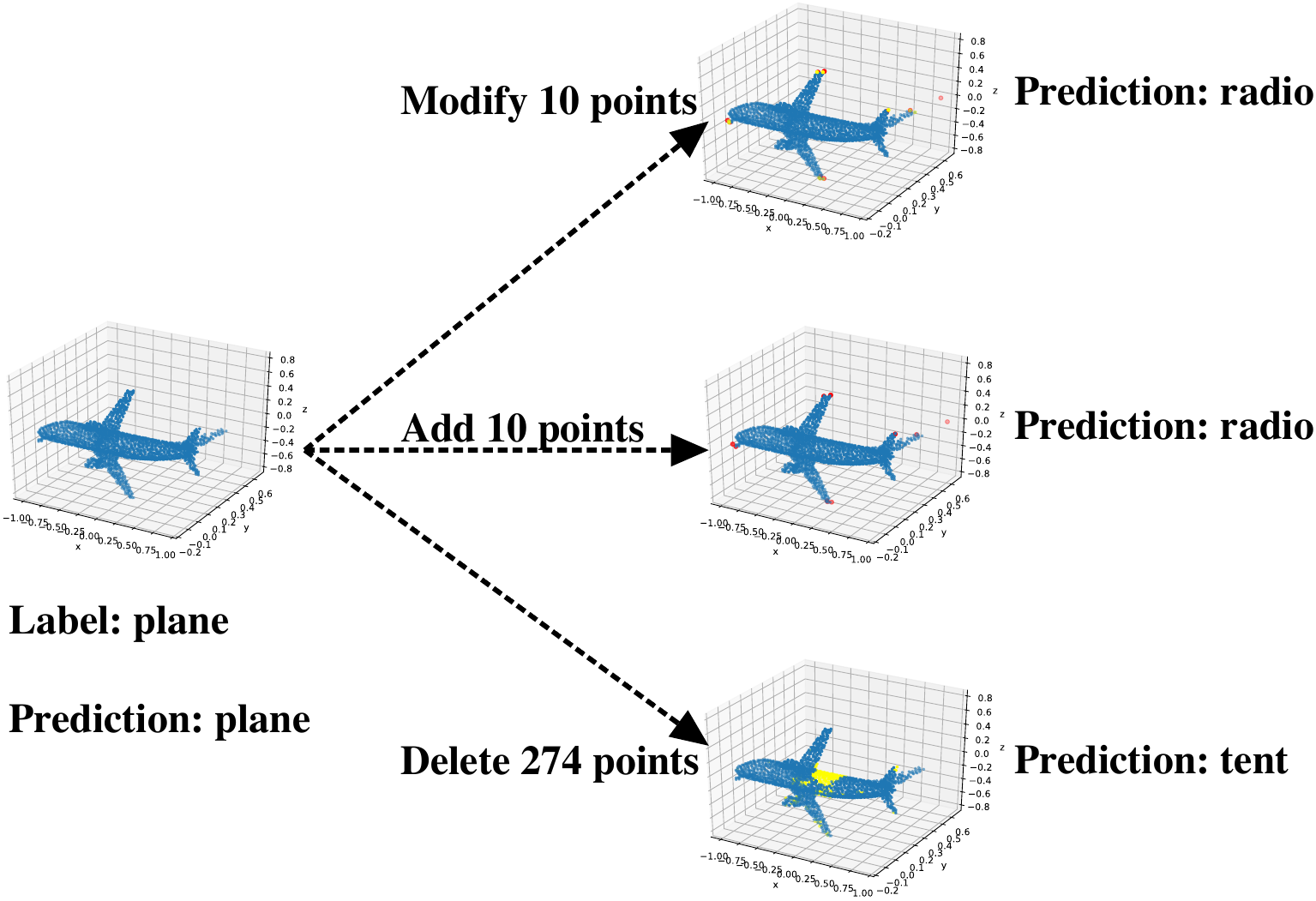}
    \caption{Top: point modification attack. Middle: point addition attack. Bottom: point deletion attack. Red points are added  and yellow points are deleted.}
    \label{fig:attack}
    \vspace{-3mm}
\end{figure}

Several \emph{empirical defenses}~\cite{liu2019extending,zhou2019dup,yang2019adversarial,dong2020self} have been proposed to mitigate the attacks. Roughly speaking, these defenses aim to detect the attacks or train more robust point cloud classifiers. For instance, Zhou et al.~\cite{zhou2019dup} proposed a defense called DUP-Net, whose key step is to detect outlier points and discard them before classifying a point cloud. These defenses, however, lack provable robustness guarantees and are often broken by more advanced attacks. For instance, Ma et al.~\cite{ma2020efficient} proposed a joint gradient based attack and showed that it can achieve high attack success rates even if DUP-Net~\cite{zhou2019dup} is deployed. 

Therefore, it is urgent to study certified defenses that have provable robustness guarantees. We say a point cloud classifier is provably robust if it certifiably predicts the same label for a point cloud when the number of modified, added, and/or deleted points is no larger than a threshold. Randomized smoothing~\cite{cohen2019certified} is state-of-the-art technique for building provably robust 2D image classifiers. For instance, via adding Gaussian noise to a 2D image, randomized smoothing    provably predicts the same label for the image when the $\ell_2$-norm of the adversarial perturbations added to the image is no larger than a threshold. Randomized smoothing can be applied to point cloud classification. For instance, we can add Gaussian noise to each point of a point cloud and randomized smoothing can predict the same label for the point cloud when the adversarial modification of its points is bounded. However,   randomized smoothing  requires the size of the input (e.g., the number of pixels in a 2D image or number of points in a 3D point cloud) remains unchanged under adversarial attacks. Therefore, randomized smoothing  is only applicable to certify robustness against the point modification attacks that do not change the size of a point cloud, leaving the other three types of attacks untouched.  

\myparatight{Our work} In this work, we propose PointGuard, the first defense that has provable robustness guarantees against point modification, addition, deletion, and perturbation attacks. Suppose we are given a 3D point cloud and an arbitrary point cloud classifier. PointGuard first creates a \emph{subsampled point cloud}, which contains a random subset of $k$ points subsampled from the original point cloud. Since the subsampled point cloud is random, its label predicted by the point cloud classifier is also random. We use $p_i$ (called \emph{label probability}) to denote the probability that the point cloud classifier predicts label $i$ for the random subsampled point cloud. Our PointGuard predicts the label that has the largest label probability for the original 3D point cloud. 

Our major theoretical contributions are twofold. First, we show that, with any point cloud classifier, our PointGuard provably predicts the same label for a point cloud when the number of modified, added, and/or deleted points is no larger than a threshold. We call the threshold \emph{certified perturbation size}. Note that the certified perturbation size may be different for different testing point clouds and point cloud classifiers. We derive the certified perturbation size via leveraging the Neyman-Pearson Lemma~\cite{neyman1933ix}. Second, we prove that, if no assumptions on the point cloud classifier are made, our derived certified perturbation size is tight, i.e.,  it is impossible to derive a certified perturbation size larger than ours. 

Our derived certified perturbation size for a point cloud is the solution to an optimization problem, which relies on the  point cloud's label probabilities. However, it is challenging to compute the exact label probabilities in practice since it requires predicting the labels for an exponential number of subsampled point clouds. In particular, computing the exact label probabilities for a point cloud requires predicting the labels for ${n \choose k}$ subsampled point clouds if the point cloud contains $n$ points. To address the challenge, we develop a Monte-Carlo algorithm to estimate the lower and upper bounds of the label probabilities with probabilistic guarantees via predicting labels for $N << {n \choose k}$ subsampled point clouds. Given the estimated label probability bounds, we solve the optimization problem to obtain the certified perturbation size.  

We empirically evaluate PointGuard on ModelNet40 and ScanNet. To demonstrate the generality of PointGuard, we consider two point cloud classifiers, i.e., PointNet~\cite{qi2017pointnet} and DGCNN~\cite{wang2019dynamic}. We adopt \emph{certified accuracy} as our evaluation metric. In particular, the certified accuracy at $r$ perturbed points   is the fraction of the testing point clouds  whose labels are correctly predicted and whose certified perturbation sizes are no smaller than $r$. Since the certified accuracy of a standard point cloud classifier is unknown, 
 we measure  a standard point cloud classifier using its \emph{empirical accuracy} under an empirical attack, i.e., we use an empirical attack to perturb the testing point clouds and use the point cloud classifier to classify them. Our experimental results show that the certified accuracy of PointGuard is substantially higher than the empirical accuracy of a standard point cloud classifier in many cases. For instance, on ModelNet40, PointNet achieves 0\% empirical accuracy while PointGuard with $k=16$ achieves 69.7\% certified accuracy when an attacker can arbitrarily modify 30 points of each testing point cloud. We also compare PointGuard with randomized smoothing for point modification attacks, as randomized smoothing is only applicable to such attacks. Our results show that PointGuard substantially outperforms randomized smoothing, e.g.,  randomized smoothing  achieves 0\% certified accuracy under the above setting.

In summary, our key contributions are as follows:
\begin{itemize}
    \item We propose PointGuard, the first 3D point cloud classification system that is provably robust against different types of adversarial attacks. 
    \item We derive the certified robustness guarantee of PointGuard and prove its tightness. Moreover, we design an algorithm to efficiently compute our certified robustness guarantee.
    \item We evaluate our PointGuard on two datasets. 
\end{itemize}

\section{Background and Related Work}
\myparatight{3D point cloud classification} A 3D point cloud is an unordered set of points sampled from the surface of a 3D object or shape. We use $T=\{O_i\ | i = 1, 2, \cdots, n \}$ to denote a 3D point cloud, where each point $O_i$ is a vector that contains the $(x, y, z)$ coordinates and possibly some other features, e.g., colours. In 3D point cloud classification, given a point cloud $T$ as input, a classifier $f$  predicts a label $y \in \{1, 2, \cdots, c\}$ for it. For instance, the label could represent the type of 3D object from which the point cloud $T$ is sampled. Formally, we have  $y = f(T)$. Many deep learning classifiers (e.g.,~\cite{qi2017pointnet,qi2017pointnet++,li2018pointcnn,wang2019dynamic}) have been proposed for 3D point cloud classification. For instance, Qi et al.~\cite{qi2017pointnet} proposed PointNet, which  can directly consume 3D point cloud. Roughly speaking, PointNet first applies input and feature transformations to the input points, and then aggregates point features by max pooling. One important characteristic of  PointNet is permutation invariant. In particular, given a  3D point cloud, the predicted label  does not rely on the order of the points in the point cloud.  

\myparatight{Adversarial attacks to 3D point cloud classification} Multiple recent works~\cite{xiang2019generating,wicker2019robustness,zheng2019pointcloud,yang2019adversarial, tsai2020robust,hamdi2020advpc,zhao2020isometry,zhou2020lg,kim2020minimal} showed that 3D point cloud classification is vulnerable to (physically feasible) adversarial attacks. Roughly speaking, given a 3D point cloud, these attacks aim to make a 3D point cloud classifier misclassify it via carefully modifying, adding, and/or deleting some points from it. Xiang et al.~\cite{xiang2019generating} proposed point perturbation and addition attacks. For instance, they showed that PointNet~\cite{qi2017pointnet} can be fooled by adding a limited number of synthesized point clusters with meaningful shapes such as balls to a point cloud. Yang et al.~\cite{yang2019adversarial} explored point modification, addition, and deletion attacks. In particular, their point modification attack is inspired by gradient-guided attack methods, which were designed to attack 2D image classification. Their point addition and deletion attacks aim to add or remove the \emph{critical} points, which can be identified by their label-dependent importance scores obtained by computing the gradient of a classifier's output with respect to the input.
Wicker et al.~\cite{wicker2019robustness} proposed a point deletion attack which also leveraged critical points. Specifically, they developed an algorithm to identify critical points in a random and iterative manner. 
Ma et al.~\cite{ma2020efficient} proposed a joint gradient based attack and showed that the proposed attack can break an empirical defense~\cite{zhou2019dup} on multiple 3D point cloud classifiers. 

\myparatight{Existing empirical defenses} Several empirical defenses~\cite{liu2019extending,zhou2019dup,yang2019adversarial,dong2020self,wu2021ifdefense,sun2021on} have been proposed to defend against adversarial attacks. Roughly speaking, these defenses aim to detect the attacks or train more robust point cloud classifiers. For instance, Zhou et al.~\cite{zhou2019dup} proposed DUP-Net, whose key idea is to detect and discard outlier points before classifying a point cloud. Dong et al.~\cite{dong2020self} designed a new self-robust 3D point recognition network. In particular, the network first extracts local features from the input point cloud and then uses a self-attention mechanism to aggregate these local features, which could ignore adversarial local features.
Liu et al.~\cite{liu2019extending} generalized adversarial training~\cite{goodfellow2015explaining} to build more robust point cloud classifiers. However, these empirical defenses lack certified robustness guarantees and are often broken by advanced adaptive attacks. 

\myparatight{Existing certified defenses} 
To the best of our knowledge, there are no certified defenses against adversarial attacks for 3D point cloud classification. We note that, however,  many certified defenses against adversarial attacks have been proposed for 2D image classification. Among these defenses, randomized smoothing~\cite{cao2017mitigating,liu2018towards,lecuyer2019certified,li2019certified,cohen2019certified,salman2019provably,pinot2019theoretical,lee2019tight, levine2020robustness,jia2020certified,wang2020certifying} is state-of-the-art because it is scalable to large-scale neural networks and applicable to arbitrary classifiers. Roughly speaking, randomized smoothing adds random noise (e.g., Gaussian noise) to an input image before classifying it; and randomized smoothing provably predicts the same label for the image when the adversarial perturbation added to the image is bounded under certain metrics, e.g., $\ell_2$ norm. We can leverage randomized smoothing to certify robustness of point cloud classification via adding random noise to each point of a point cloud. However, randomized smoothing requires the size of the input (e.g., the number of points in a point cloud) remains unaltered under adversarial attacks. As a result, randomized smoothing can only certify robustness against  point modification attacks, leaving certified defenses against the other three types of attacks untouched. 

We note that Jia et al.~\cite{jia2020intrinsic} analyzed the intrinsic certified robustness of bagging against data poisoning attacks. Both Jia et al. and our work use random sampling. 
However, our work has several key differences with Jia et al.. First, we solve a different problem from Jia et al.. More specifically, we aim to derive the certified robustness guarantees of 3D point cloud classification against adversarial attacks, while they aim to defend against data poisoning attacks. Second, we use sampling \emph{without} replacement while Jia et al. adopted sampling \emph{with} replacement, which results in significant technical differences in the derivation of the certified robustness guarantees (please refer to Supplemental Material for details). Third, we only need to train a single 3D point cloud classifier while Jia et al. requires to train multiple base classifiers.

\section{Our PointGuard}\label{method}
In this section, we first describe how to build our PointGuard from an arbitrary 3D point cloud classifier. Then, we derive the certified robustness guarantee of our PointGuard and show its tightness. Finally, we develop an algorithm to compute our certified robustness guarantee in practice.

\subsection{Building our PointGuard}
Recall that an attacker's goal is to modify, add, and/or delete a small number of points in a 3D point cloud $T$ such that it is misclassified by a point cloud classifier. Suppose we create multiple \emph{subsampled point clouds} from $T$, each of which includes $k$ points subsampled from $T$ uniformly at random without replacement. Our intuition is that, when the number of adversarially modified/added/deleted points is bounded, the majority of the subsampled point clouds do not include any adversarially modified/added/deleted points and thus the majority vote among their labels predicted by the point cloud classifier may still correctly predict the label of the original point cloud $T$. Figure~\ref{fig:pointguard} provides an example to illustrate the intuition. 

\begin{figure}[]
    \centering
    \includegraphics[width=0.47\textwidth]{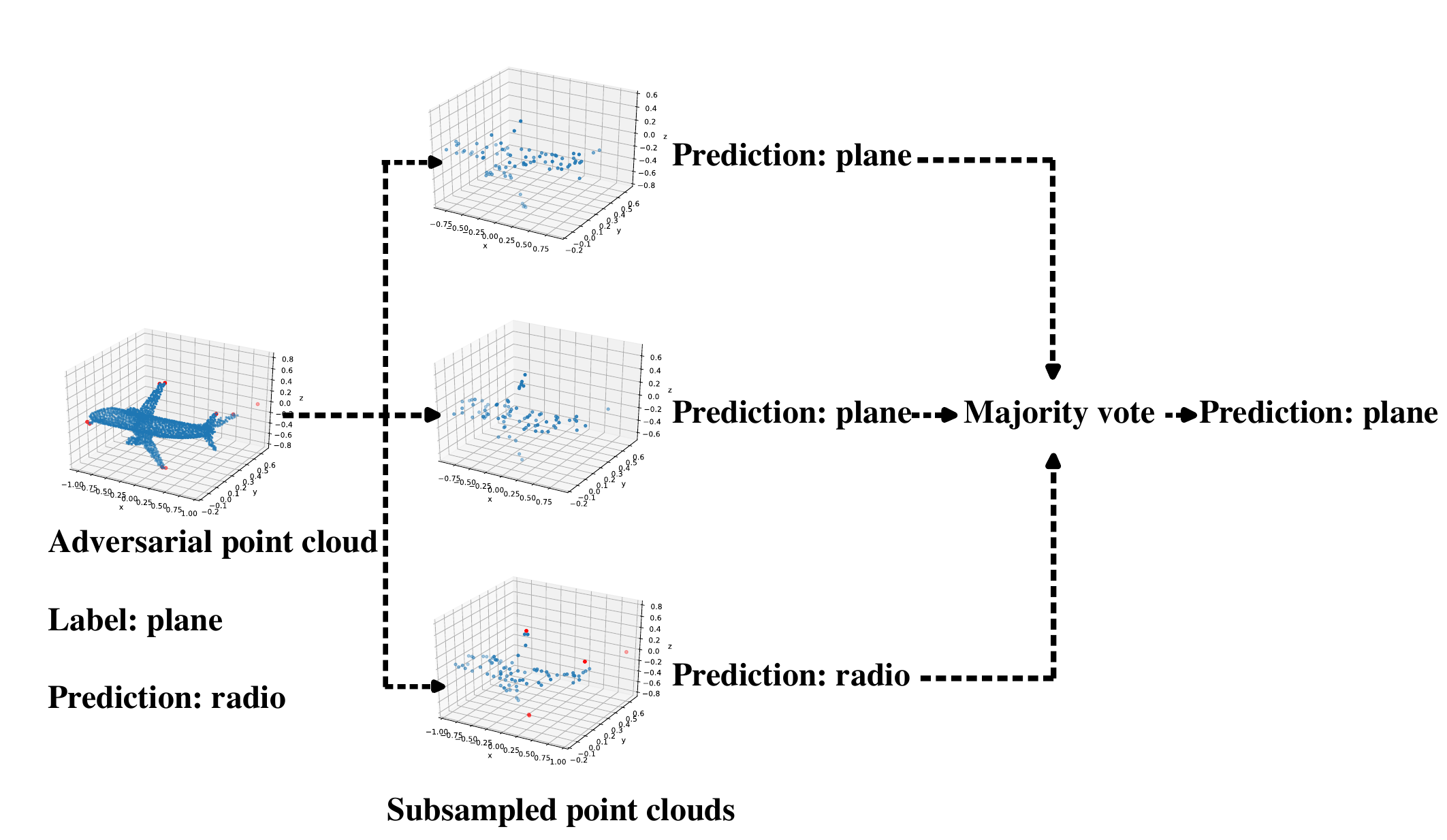}
    \caption{An example to illustrate the robustness of PointGuard. A point cloud classifier misclassifies the adversarial point cloud, where the red points are adversarially perturbed points. Three subsampled point clouds are created, and two of them (top and middle ones) do not contain adversarially perturbed points. Our PointGuard predicts the correct label for the adversarial point cloud after majority vote. }
    \label{fig:pointguard}
    \vspace{-3mm}
\end{figure} 

We design our PointGuard based on such majority-vote intuition. Next, we present a probabilistic view of the majority-vote intuition, which enables us to derive the certified robustness guarantees of PointGuard. 
Formally, we use $S_{k}(T)$ to denote a random subsampled point cloud with $k$ points from a point cloud $T$. Given an arbitrary point cloud classifier $f$, we use it to predict the label of the subsampled point cloud $S_{k}(T)$. Since the subsampled point cloud $S_{k}(T)$ is random, the predicted label $f(S_{k}(T))$ is also random. We use $p_i=\text{Pr}(f(S_{k}(T))=i)$, which we call \emph{label probability}, to denote the probability that the predicted label is $i$, where $i\in \{1, 2, \cdots, c\}$. Our PointGuard  predicts the label with the largest label probability for the point cloud $T$. For simplicity, we use $g$ to denote our PointGuard. Then, we have the following: 
\begin{align}
    g(T) = \argmax_{i \in \{1, 2, \cdots, c\}}   p_i. 
\end{align}

\subsection{Deriving the Certified Perturbation Size}
\myparatight{Certified perturbation size} In an adversarial attack, an attacker perturbs, i.e., modifies, adds, and/or deletes, some points in a point cloud $T$. We use $T^*$ to denote the perturbed point cloud. Given a point cloud $T$ and its perturbed version $T^{*}$, we define the \emph{perturbation size} $\eta(T, T^{*}) = \max \{|T|,|T^{*}|\}-|T \cap T^{*} |$, where $|\cdot|$  denotes the number of points in a point cloud. Intuitively, given $T$ and $T^{*}$, the perturbation size $\eta(T, T^{*})$ indicates the minimum number of modified, added, and/or deleted points that are required to turn $T$ into $T^{*}$. Given the point cloud $T$ and an arbitrary positive integer $r$, we define the following set: 
\begin{align}
    \Gamma(T, r) = \{T^{*}\mid  \eta(T, T^{*}) \leq r \}. 
\end{align}
Intuitively, $\Gamma(T, r) $ denotes the set of perturbed point clouds that can be obtained by perturbing at most $r$ points in $T$. 

Our goal is to find a maximum $r^*$ such that our PointGuard  provably predicts the same label for $\forall ~T^{*} \in \Gamma(T, r^*)$. Formally, we have:
\begin{align}
    r^* = \argmax_r r\ s.t.\ g(T)=g(T^{*}),\ \forall T^{*} \in \Gamma(T, r). 
\end{align}
We call $r^*$ \emph{certified perturbation size}. Note that the certified perturbation size may be different for different point clouds. 

\myparatight{Overview of our derivation} Next, we provide an overview of our proof to derive the certified perturbation size of our PointGuard for a point cloud $T$. The detailed proof is shown in the Supplemental Material. 
Our derivation is inspired by previous work~\cite{jia2019certified,jia2020intrinsic}. In particular, the key idea is to divide the space into different regions based on the Neyman-Pearson Lemma~\cite{neyman1933ix}. However, due to the difference in sampling methods, our space divisions are significantly different from previous work~\cite{jia2020intrinsic}.
For simplicity, we define random variables $\mathbf{W} = S_{k}(T)$ and $\mathbf{Z} = S_{k}(T^*)$, which represent the random subsampled point clouds from $T$ and $T^{*}$, respectively. Given these two random variables, we denote $p_i =  \text{Pr}(f(\mathbf{W})=i)$ and $p_i^{*} = \text{Pr}(f(\mathbf{Z})=i)$, where $i \in \{1, 2, \cdots, c\}$. Moreover, we denote $y =g(T)= \argmax_{i\in\{1, 2, \cdots, c\}} p_i $. Our goal is to find the maximum $r^*$ such that $y =g(T^*) = \argmax_{i\in\{1, 2, \cdots, c\}} p_i^{*}$ (i.e., $p_{y}^{*} > \max_{i \neq y} p_{i}^{*}$) for $\forall T^{*} \in \Gamma(T, r^*)$. 

The major challenge in deriving the certified perturbation size $r^{*}$ is to compute $p_i^{*}$. Specifically, the challenge stems from the complexity of the point cloud classifier $f$ and predicting labels for the ${t \choose k}$ subsampled point clouds $S_k(T^*)$, where $t$ is the number of points in $T^*$. To overcome the challenge, we propose to derive a lower bound of $p_{y}^{*}$ and an upper bound of $\max_{i \neq y} p_{i}^{*}$. Moreover, based on the Neyman-Pearson Lemma~\cite{neyman1933ix}, we derive the lower/upper bounds as the probabilities that the random variable $\mathbf{Z}$ is in certain regions of its domain space, which can be efficiently computed for any given $r$.  Then, we find the certified perturbation size $r^{*}$ as the maximum $r$ such that the lower bound of $p_{y}^{*}$ is larger than the upper bound of $\max_{i \neq y} p_{i}^{*}$. 

Next, we discuss how we derive the lower/upper bounds. Suppose we have a lower bound $\underline{p_y}$ of $p_y$  and an upper bound $\overline{p}_e$ of the second largest label probability $p_e$ for the original point cloud $T$. Formally, we have the following: 
\begin{align}
\label{probability_bound_main}
    p_y  \geq \underline{p_y} \geq  \overline{p}_e  \geq p_e = \max_{i \neq y}  p_i,  
\end{align}
where $\underline{p}$ and $\overline{p}$ denote the lower and upper bounds of $p$, respectively. Note that $\underline{p_y}$ and $\overline{p}_e$ can be estimated using the unperturbed point cloud $T$. In Section~\ref{estimation_in_practice}, we propose an algorithm to estimate them. Based on the fact that $p_y$ and $p_i$ $(i \neq y)$ should be integer multiples of $1/{n \choose k}$, where $n$ is the number of points in $T$, we have the following:
\begin{align}
\label{probability_condition_advance}
& \underline{p^{\prime}_y} \triangleq \frac{\lceil \underline{p_y}\cdot{n \choose k} \rceil}{{n \choose k}} \leq \text{Pr}(f(\mathbf{W})= y), \\
\label{probability_condition_advance_2}
& \overline{p}^{\prime}_i \triangleq \frac{\lfloor \overline{p}_i \cdot{n \choose k} \rfloor}{{n \choose k}}  \geq \text{Pr}(f(\mathbf{W})=i), \forall i \neq y.
\end{align}
Given these probability bounds $\underline{p^{\prime}_y}$ and $\overline{p}^{\prime}_i$ $(i \neq y)$, we derive a lower bound of $p_y^*$ and an upper bound of $p_i^{*}$ $(i \neq y)$ via the Neyman-Pearson Lemma~\cite{neyman1933ix}. We use $\Phi$ to denote the joint space of $\mathbf{W}$ and $\mathbf{Z}$, where each element in the space is a 3D point cloud with $k$ points subsampled from $T$ or $T^*$. We denote by $E$ the set of intersection points between $T$ and $T^*$, i.e., $E = T \cap T^*$. Then, we can divide $\Phi$ into three disjoint regions: $\Delta_T$, $\Delta_{E}$, and $ \Delta_{T^*}$. In particular, $\Delta_E$ consists of the subsampled point clouds that can be obtained by subsampling $k$ points from $E$; and $\Delta_T$ (or $\Delta_{T^{*}}$) consists of the subsampled point clouds that are subsampled from $T$ (or $T^*$) but do not belong to $\Delta_E$. 

We assume $\underline{p_y^{\prime}} > \text{Pr}(\mathbf{W} \in \Delta_{T})$. Note that we can make this assumption because our goal is to find a sufficient condition. Then, we can find a region $\Delta_y \subseteq \Delta_{E}$ such that $\text{Pr}(\mathbf{W} \in \Delta_y) = \underline{p_y^{\prime}} - \text{Pr}(\mathbf{W} \in \Delta_{T})$. We can find the region because  $\underline{p_y^{\prime}} - \text{Pr}(\mathbf{W} \in \Delta_{T})$ is an integer multiple of $1/{n \choose k}$. Similarly, we can assume $\overline{p}_i^{\prime} < \text{Pr}(\mathbf{W} \in \Delta_{E})$ since our goal is to find a sufficient condition. Then, for each $i \neq y$, we can find a region $\Delta_i \subseteq \Delta_E$ such that $\text{Pr}(\mathbf{W} \in \Delta_i) = \overline{p}_{i}^{\prime}$ based on the fact that $\overline{p}_{i}^{\prime}$ is an integer multiple of $1/{n \choose k}$. Finally, we derive the following bounds based on the Neyman-Pearson Lemma~\cite{neyman1933ix}: 
\begin{align}
& p_y^* \geq   \underline{p_y^*} = \text{Pr}(\mathbf{Z} \in \Delta_y), \\
&  p_i^* \leq \overline{p}_i^* = \text{Pr}(\mathbf{Z} \in \Delta_i \cup \Delta_{T^*}), \forall i \neq y,
\end{align}
where $\text{Pr}(\mathbf{Z} \in \Delta_y)$ and $\text{Pr}(\mathbf{Z} \in \Delta_i \cup \Delta_{T^*})$ represent the probabilities that $\mathbf{Z}$ is in the corresponding regions, which can be efficiently computed via the probability mass function of $\mathbf{Z}$. Then, our certified perturbation size $r^*$ is the maximum $r$ such that $\underline{p_y^*} > \max_{i \neq y} \overline{p}_i^*$. 

Formally, we have the following theorem: 
\begin{theorem}[Certified Perturbation Size]
\label{certified_radius_pointcloud}
Suppose we have an arbitrary point cloud classifier $f$, a 3D point cloud  $T$, and a subsampling size $k$. $y$, $e$, $\underline{p_y}\in[0,1]$, and $\overline{p}_e \in[0,1]$  satisfy  Equation~(\ref{probability_bound_main}). Then,  our PointGuard $g$ guarantees that $g(T^*)= y$, $\forall T^*\in {\Gamma}(T,r^*)$, where $r^{*}$ is the solution to the following optimization problem: 
\begin{align}
r^{*} & =  \argmax_{r} r \nonumber \\
\label{certified_condition_pointcloud}
 \text{s.t.} & \max_{n-r \leq t \leq n+r}\frac{{t \choose k}}{{n \choose k}} - 2 \cdot \frac{{\max(n,t)-r \choose k}}{{n \choose k}} 
 + 1 - \underline{p_y^{\prime}}+\overline{p}_e^{\prime}  < 0, 
\end{align}
where $\underline{p_y^{\prime}} $ and $\overline{p}_e^{\prime}$ are respectively defined in Equation~(\ref{probability_condition_advance}) and~(\ref{probability_condition_advance_2}), $n$ is the number of points in $T$, and $t$ is the number of points in $T^*$ which ranges from $n-r$ to $n+r$ when the perturbation size is $r$.
\end{theorem}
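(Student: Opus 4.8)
The plan is to fix an arbitrary perturbed point cloud $T^{*}\in\Gamma(T,r)$, couple the two randomized predictions on a common sample space, bound $p_y^{*}$ from below and every $p_i^{*}$ $(i\neq y)$ from above via the Neyman-Pearson Lemma, and finally take the worst case of these bounds over all admissible $T^{*}$. Write $t=|T^{*}|$ and $E=T\cap T^{*}$; since $\eta(T,T^{*})\le r$ we have $n-r\le t\le n+r$ and $|E|=\max(n,t)-\eta(T,T^{*})\ge\max(n,t)-r$. The point of the argument is that, although the decision regions of $f$ are complicated, the constraints \eqref{probability_condition_advance}--\eqref{probability_condition_advance_2} on their probability masses under $\mathbf{W}=S_k(T)$ are all we need, and together with the very simple structure of the pair $(\mathbf{W},\mathbf{Z}=S_k(T^{*}))$ they pin down extremal values of the masses under $\mathbf{Z}$.

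\textbf{Partition and Neyman-Pearson bounds.} Let $\Phi$ be the set of all $k$-point subsets of $T\cup T^{*}$ and split it into $\Delta_E$ (the $k$-subsets of $E$), $\Delta_T$ (the $k$-subsets of $T$ not contained in $E$), and $\Delta_{T^{*}}$ (the $k$-subsets of $T^{*}$ not contained in $E$). Then $\mathbf{W}$ is uniform on $\Delta_T\cup\Delta_E$ with $\text{Pr}(\mathbf{W}\in\Delta_E)={|E| \choose k}/{n \choose k}$, $\mathbf{Z}$ is uniform on $\Delta_{T^{*}}\cup\Delta_E$ with $\text{Pr}(\mathbf{Z}\in\Delta_E)={|E| \choose k}/{t \choose k}$, and $\text{Pr}(\mathbf{Z}\in\mathcal A)=({n \choose k}/{t \choose k})\,\text{Pr}(\mathbf{W}\in\mathcal A)$ for all $\mathcal A\subseteq\Delta_E$; equivalently, the per-atom likelihood ratio of $\mathbf{Z}$ to $\mathbf{W}$ is $0$ on $\Delta_T$, the constant ${n \choose k}/{t \choose k}$ on $\Delta_E$, and $+\infty$ on $\Delta_{T^{*}}$. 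By \eqref{probability_condition_advance}--\eqref{probability_condition_advance_2}, $\{x:f(x)=y\}$ has $\mathbf{W}$-mass at least $\underline{p_y^{\prime}}$ and each $\{x:f(x)=i\}$ has $\mathbf{W}$-mass at most $\overline{p}_i^{\prime}\le\overline{p}_e^{\prime}$. Since the likelihood ratio is smallest on $\Delta_T$ and next on $\Delta_E$, Neyman-Pearson shows $\text{Pr}(f(\mathbf{Z})=y)$ is minimized by placing that region on all of $\Delta_T$ plus a sub-region $\Delta_y\subseteq\Delta_E$ with $\text{Pr}(\mathbf{W}\in\Delta_y)=\underline{p_y^{\prime}}-\text{Pr}(\mathbf{W}\in\Delta_T)$ — such a sub-region exists because both numbers are integer multiples of $1/{n \choose k}$ — so $p_y^{*}\ge\text{Pr}(\mathbf{Z}\in\Delta_y)$. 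Symmetrically, $\text{Pr}(f(\mathbf{Z})=i)$ is maximized by taking all of $\Delta_{T^{*}}$ (which carries no $\mathbf{W}$-mass) plus a sub-region $\Delta_i\subseteq\Delta_E$ with $\text{Pr}(\mathbf{W}\in\Delta_i)=\overline{p}_i^{\prime}$, so $p_i^{*}\le\text{Pr}(\mathbf{Z}\in\Delta_i\cup\Delta_{T^{*}})$, and since $\overline{p}_i^{\prime}\le\overline{p}_e^{\prime}$ this is at most the bound obtained with $i=e$. Substituting the counts above turns the lower bound into $\big({n \choose k}(\underline{p_y^{\prime}}-1)+{|E| \choose k}\big)/{t \choose k}$ and the maximal upper bound into $\big({n \choose k}\,\overline{p}_e^{\prime}+{t \choose k}-{|E| \choose k}\big)/{t \choose k}$.

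\textbf{Worst case and conclusion.} A sufficient condition for $g(T^{*})=y$ is that the lower bound strictly exceeds the upper bound; clearing the common factor ${n \choose k}/{t \choose k}$, this is exactly
\begin{align}
\frac{{t \choose k}}{{n \choose k}}-2\cdot\frac{{|E| \choose k}}{{n \choose k}}+1-\underline{p_y^{\prime}}+\overline{p}_e^{\prime}<0. \nonumber
\end{align}
Because ${|E| \choose k}$ is nondecreasing in $|E|$ and enters with a negative coefficient, the left-hand side is largest at $|E|=\max(n,t)-r$; maximizing the remaining expression over $n-r\le t\le n+r$ yields precisely the constraint in \eqref{certified_condition_pointcloud}. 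Hence whenever $r\le r^{*}$ the displayed inequality holds at every admissible pair $(t,|E|)$ and every $T^{*}\in\Gamma(T,r)$, giving $p_y^{*}>\max_{i\neq y}p_i^{*}$ and therefore $g(T^{*})=\argmax_{i}p_i^{*}=y$.

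\textbf{Main obstacle.} The delicate step is the Neyman-Pearson argument: invoking the lemma in both directions over a discrete space with a three-valued likelihood ratio, verifying that the boundary sub-regions $\Delta_y,\Delta_i\subseteq\Delta_E$ of the prescribed mass genuinely exist (the $1/{n \choose k}$-divisibility argument), and handling the degenerate cases $\underline{p_y^{\prime}}\le\text{Pr}(\mathbf{W}\in\Delta_T)$ or $\overline{p}_e^{\prime}\ge\text{Pr}(\mathbf{W}\in\Delta_E)$, where the raw bounds collapse to $0$ and $1$ — one checks that both inequalities are automatically excluded whenever \eqref{certified_condition_pointcloud} holds for the current $r$, so they cause no loss. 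The partition counting and the monotonic worst-casing over $(t,|E|)$ are routine.
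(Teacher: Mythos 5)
Your proposal is correct and follows essentially the same route as the paper's own proof: the same partition of the joint space into $\Delta_T$, $\Delta_E$, $\Delta_{T^*}$, the same Neyman--Pearson construction of $\Delta_y$ and $\Delta_i$ via the $1/{n \choose k}$-divisibility of $\underline{p_y^{\prime}}$ and $\overline{p}_i^{\prime}$, and the same worst-casing over $t$ and $|E|=\max(n,t)-r$ to obtain the constraint in Equation~(\ref{certified_condition_pointcloud}). Your explicit monotonicity argument in $|E|$ and the check that the degenerate cases are excluded whenever the constraint holds are slightly more detailed than the paper's treatment but substantively identical.
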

\begin{proof}
See Section~\ref{proof_of_certified_radius_bagging} in Supplementary Material. 
\end{proof}

Note that our Theorem~\ref{certified_radius_pointcloud} can be applied to any of the four types of adversarial attacks to point cloud classification, i.e., point perturbation, modification, addition, and deletion attacks. Moreover, for point modification, addition, and deletion attacks, we can further simplify the constraint in Equation~(\ref{certified_condition_pointcloud}) as there is a simple relationship between $t$, $n$, and $r$. Specifically, we have the following corollaries. 
\begin{corollary}[Point Modification Attacks]
\label{modification}
Suppose an attacker only modifies existing points in a 3D point cloud, i.e., we have $t=n$. Then, the constraint in Equation~(\ref{certified_condition_pointcloud})  reduces to $1 - \frac{{n-r \choose k}}{{n \choose k}} - \frac{\underline{p_y^{\prime}}-\overline{p}_e^{\prime}}{2} < 0$.
\end{corollary}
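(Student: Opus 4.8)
The plan is to specialize the constraint in Equation~(\ref{certified_condition_pointcloud}) to the modification setting and simplify by elementary algebra. First I would note that in a point modification attack the adversary neither adds nor deletes points, so $|T^*| = |T| = n$ for every $T^* \in \Gamma(T,r)$; hence the number of points $t$ in the perturbed point cloud is fixed at $t = n$. Re-examining the constraint in Theorem~\ref{certified_radius_pointcloud} under this restriction, the range $n - r \le t \le n + r$ over which the maximum is taken collapses to the single value $t = n$, so the $\max_{n-r \le t \le n+r}(\cdot)$ term in Equation~(\ref{certified_condition_pointcloud}) may be replaced by its value at $t = n$.

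Next I would substitute $t = n$ into the two binomial ratios. Since ${t \choose k} = {n \choose k}$, we have $\frac{{t \choose k}}{{n \choose k}} = 1$, and since $\max(n,t) = n$, we have $\frac{{\max(n,t)-r \choose k}}{{n \choose k}} = \frac{{n-r \choose k}}{{n \choose k}}$. The constraint then becomes $1 - 2 \cdot \frac{{n-r \choose k}}{{n \choose k}} + 1 - \underline{p_y^{\prime}} + \overline{p}_e^{\prime} < 0$, i.e., $2 - 2 \cdot \frac{{n-r \choose k}}{{n \choose k}} - (\underline{p_y^{\prime}} - \overline{p}_e^{\prime}) < 0$; dividing both sides by $2$ gives exactly $1 - \frac{{n-r \choose k}}{{n \choose k}} - \frac{\underline{p_y^{\prime}} - \overline{p}_e^{\prime}}{2} < 0$, which is the claimed reduction.

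The only step that needs more than bookkeeping is the first one: justifying that restricting attention to $t = n$ is legitimate rather than artificially weakening the guarantee. Here one should recall that the proof of Theorem~\ref{certified_radius_pointcloud} bounds $p_y^*$ and $\max_{i \neq y} p_i^*$ for one fixed $T^*$ at a time, through the region decomposition $\Delta_T, \Delta_E, \Delta_{T^*}$ and the Neyman--Pearson Lemma, and only afterwards takes a worst case over $\Gamma(T,r)$. A point modification attack merely shrinks $\Gamma(T,r)$ to its sub-collection of point clouds with $t = n$, so the worst case over this smaller family is still the relevant quantity and is attained at $t = n$; every other part of the argument carries over verbatim. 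Once this observation is in place, the remaining computation is the routine substitution above, so I do not anticipate any real difficulty in the proof.
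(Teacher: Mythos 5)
Your proposal is correct and follows essentially the same route the paper intends: substitute $t=n$ (so ${t\choose k}/{n\choose k}=1$ and $\max(n,t)-r=n-r$) into the constraint of Equation~(\ref{certified_condition_pointcloud}) and divide by $2$, with the restriction of the worst case to the sub-family of $\Gamma(T,r)$ having $|T^*|=n$ justified exactly as you describe. No gaps.
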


\begin{corollary}[Point Addition Attacks]
\label{addition}
Suppose an attacker  only adds new points to a 3D point cloud, i.e., we have $t=n+r$. Then, the constraint in Equation~(\ref{certified_condition_pointcloud})   reduces to $\frac{{n+r \choose k}}{{n \choose k}} - 1 -  \underline{p_y^{\prime}}+\overline{p}_e^{\prime}  < 0$.
\end{corollary}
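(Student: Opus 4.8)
The plan is to derive Corollary~\ref{addition} by re-running the space-division / Neyman--Pearson argument behind Theorem~\ref{certified_radius_pointcloud} under the extra structure of a pure addition attack, and then optimizing over the size of the perturbed cloud. The key observation is that an addition attack keeps every original point, so $E=T\cap T^{*}=T$; hence the region $\Delta_T$ from the Theorem~\ref{certified_radius_pointcloud} overview (the size-$k$ subsamples of $T$ that are not subsamples of $E$) is empty, and $\text{Pr}(\mathbf{W}\in\Delta_E)=1$. I would also note that for such a $T^{*}$ with $|T^{*}|=t$ the true perturbation size is $\eta(T,T^{*})=t-n$, ranging over $\{0,1,\ldots,r\}$; this is what makes the addition bound sharper than naively plugging into Equation~(\ref{certified_condition_pointcloud}) with the global $r$, since the $-2\binom{\max(n,t)-r}{k}/\binom{n}{k}$ term there reflects a pessimistic split of the budget (extra modifications) that a pure addition attack cannot realize.

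Concretely, I would fix a perturbed cloud $T^{*}\supseteq T$ with $|T^{*}|=t\le n+r$ and redo the region construction using $\text{Pr}(\mathbf{W}\in\Delta_T)=0$: take $\Delta_y\subseteq\Delta_E$ with $\text{Pr}(\mathbf{W}\in\Delta_y)=\underline{p_y^{\prime}}$ and $\Delta_e\subseteq\Delta_E$ with $\text{Pr}(\mathbf{W}\in\Delta_e)=\overline{p}_e^{\prime}$ (the Neyman--Pearson choice inside $\Delta_E$ is immaterial because the likelihood ratio is the constant $\binom{n}{k}/\binom{t}{k}$ there). Transferring to $\mathbf{Z}=S_k(T^{*})$, which is uniform over the $\binom{t}{k}$ size-$k$ subsets of $T^{*}$ while $\mathbf{W}$ is uniform over the $\binom{n}{k}$ size-$k$ subsets of $T$, gives $\underline{p_y^{*}}=\underline{p_y^{\prime}}\binom{n}{k}/\binom{t}{k}$ and $\overline{p}_e^{*}=\overline{p}_e^{\prime}\binom{n}{k}/\binom{t}{k}+\bigl(1-\binom{n}{k}/\binom{t}{k}\bigr)$, the last term being $\text{Pr}(\mathbf{Z}\in\Delta_{T^{*}})$, the chance that a random $k$-subsample of $T^{*}$ touches an inserted point. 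Writing $q_t=\binom{n}{k}/\binom{t}{k}$, the condition $\underline{p_y^{*}}>\overline{p}_e^{*}$ rearranges to $(\underline{p_y^{\prime}}-\overline{p}_e^{\prime}+1)q_t>1$, i.e.\ $\binom{t}{k}/\binom{n}{k}-1-\underline{p_y^{\prime}}+\overline{p}_e^{\prime}<0$. Since $\binom{t}{k}$ is nondecreasing in $t$, the binding instance is $t=n+r$, giving the claimed constraint and $r^{*}$ as the largest $r$ satisfying it. As a consistency check, substituting $t=n+r$ directly into Equation~(\ref{certified_condition_pointcloud}) reproduces the same inequality, because $\max(n,n+r)-r=n$ forces $\binom{\max(n,t)-r}{k}/\binom{n}{k}=1$ and collapses the $\max$ over $t$ to the single term $\binom{n+r}{k}/\binom{n}{k}$, leaving $\binom{n+r}{k}/\binom{n}{k}-2+1-\underline{p_y^{\prime}}+\overline{p}_e^{\prime}<0$.

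The step I expect to be the main obstacle is conceptual rather than computational: recognizing that one should certify each $T^{*}$ against its own perturbation size $t-n$ (so the relevant family is $\{T^{*}:T\subseteq T^{*},\ |T^{*}|\le n+r\}$) and then arguing the worst case over $t$ in that family is $t=n+r$, rather than uncritically importing the $\max_{n-r\le t\le n+r}$ from Equation~(\ref{certified_condition_pointcloud}) (which, restricted to $t<n+r$, would overcount by allowing modifications). Alongside this I would verify the two sufficiency assumptions used in the Theorem~\ref{certified_radius_pointcloud} overview are non-vacuous here, namely $\text{Pr}(\mathbf{W}\in\Delta_T)=0<\underline{p_y^{\prime}}$ and $\text{Pr}(\mathbf{W}\in\Delta_E)=1>\overline{p}_i^{\prime}$, both immediate when $y$ is strictly the top label, so that no separate degenerate case arises.
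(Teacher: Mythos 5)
Your proposal is correct. The paper gives no explicit proof of this corollary; its intended argument is exactly your ``consistency check'': substitute $t=n+r$ into the constraint of Equation~(\ref{certified_condition_pointcloud}), observe that $\max(n,t)-r=n$ makes the middle term $-2\binom{\max(n,t)-r}{k}/\binom{n}{k}=-2$ and collapses the maximum over $t$, leaving $\binom{n+r}{k}/\binom{n}{k}-1-\underline{p_y^{\prime}}+\overline{p}_e^{\prime}<0$. Your main line---re-running the Neyman--Pearson region construction with $E=T$, $\Delta_T=\emptyset$, $\Pr(\mathbf{W}\in\Delta_E)=1$, and a constant likelihood ratio on $\Delta_E$---is a specialization of the proof of Theorem~\ref{certified_radius_pointcloud} rather than a genuinely different route, and your algebra there is right: $\underline{p_y^{*}}=\underline{p_y^{\prime}}\,\binom{n}{k}/\binom{t}{k}$ and $\overline{p}_e^{*}=\overline{p}_e^{\prime}\,\binom{n}{k}/\binom{t}{k}+1-\binom{n}{k}/\binom{t}{k}$ do rearrange to the stated inequality, and monotonicity of $\binom{t}{k}$ makes $t=n+r$ the binding case over the family $\{T^{*}:T\subseteq T^{*},\,|T^{*}|\le n+r\}$. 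What the extra work buys is the observation that the corollary's constraint is the exact worst case for a pure-addition adversary (every such $T^{*}$ has $s=n$), i.e., specializing the general bound at $t=n+r$ loses nothing, whereas applying Equation~(\ref{certified_condition_pointcloud})'s pessimistic $s=\max(n,t)-r$ at intermediate $t$ would over-count; your check that the sufficiency assumptions $\underline{p_y^{\prime}}\ge\Pr(\mathbf{W}\in\Delta_T)=0$ and $\overline{p}_i^{\prime}\le\Pr(\mathbf{W}\in\Delta_E)=1$ hold trivially here is also sound.
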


\begin{corollary}[Point Deletion Attacks]
\label{deletion}
Suppose an attacker  only deletes existing points from a 3D point cloud, i.e., we have $t=n-r$. Then, the constraint in Equation~(\ref{certified_condition_pointcloud}) reduces to $- \frac{{n-r \choose k}}{{n \choose k}} + 1 - \underline{p_y^{\prime}}+\overline{p}_e^{\prime} < 0$.
\end{corollary}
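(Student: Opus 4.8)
The plan is to specialize the constraint in Equation~(\ref{certified_condition_pointcloud}) to the deletion setting, where the decisive structural fact is that $T^*$ is a subset of $T$. First I would record the consequences of this: since $T^*\subseteq T$ we have $t=|T^*|\le n$, so $\max(n,t)=n$; and since $r$ is a positive integer, a deletion attack of perturbation size exactly $r$ removes $r$ points, so $t=n-r$ and the intersection set is $E=T\cap T^*=T^*$ with $|E|=t=n-r$. In particular the quantity $\max(n,t)-r$ that appears in Equation~(\ref{certified_condition_pointcloud}) equals $n-r$, and for a pure deletion adversary the maximization over $t$ there is over the single value $t=n-r$.

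Substituting $t=n-r$ and $\max(n,t)=n$ into Equation~(\ref{certified_condition_pointcloud}) gives $\frac{{n-r\choose k}}{{n\choose k}}-2\cdot\frac{{n-r\choose k}}{{n\choose k}}+1-\underline{p_y^{\prime}}+\overline{p}_e^{\prime}<0$, and collecting the first two terms yields $-\frac{{n-r\choose k}}{{n\choose k}}+1-\underline{p_y^{\prime}}+\overline{p}_e^{\prime}<0$, the claimed reduction. To see that this single inequality certifies $g(T^*)=y$ for \emph{all} $T^*\in\Gamma(T,r)$, not merely for those obtained by deleting exactly $r$ points, I would rerun the Neyman-Pearson region construction from the overview preceding Theorem~\ref{certified_radius_pointcloud} in the deletion case: with $E=T^*$, every $k$-point subsample of $T^*$ already lies in $\Delta_E$, so $\Delta_{T^*}=\emptyset$ and $\text{Pr}(\mathbf{Z}\in\Delta_E)=1$, while $\text{Pr}(\mathbf{W}\in\Delta_T)=1-{t\choose k}/{n\choose k}$. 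Since $\mathbf{W}$ and $\mathbf{Z}$ are both uniform on $\Delta_E$, the likelihood ratio there is the constant ${n\choose k}/{t\choose k}$, which turns $\underline{p_y^*}=\text{Pr}(\mathbf{Z}\in\Delta_y)$ and $\overline{p}_e^*=\text{Pr}(\mathbf{Z}\in\Delta_e)$ into $\underline{p_y^*}=\frac{{n\choose k}}{{t\choose k}}\big(\underline{p_y^{\prime}}-1+{t\choose k}/{n\choose k}\big)$ and $\overline{p}_e^*=\frac{{n\choose k}}{{t\choose k}}\overline{p}_e^{\prime}$. Imposing $\underline{p_y^*}>\overline{p}_e^*$ and clearing the positive factor ${t\choose k}/{n\choose k}$ reduces this to $-{t\choose k}/{n\choose k}+1-\underline{p_y^{\prime}}+\overline{p}_e^{\prime}<0$; because ${t\choose k}$ is nondecreasing in $t$, the left-hand side is largest at the smallest admissible $t=n-r$, so enforcing it there suffices.

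The main point to be careful about is exactly this last step: one must not blindly carry the general formula's $\max$ over $t$ and its $\Delta_{T^*}$ term into the deletion analysis, since for deletion $\Delta_{T^*}$ is empty and the correct intersection size is $|E|=t$ rather than $\max(n,t)-r$. These two expressions coincide only at $t=n-r$, which is both why the reduction comes out so cleanly and why the worst case sits at $t=n-r$. Everything else is inherited verbatim from the proof of Theorem~\ref{certified_radius_pointcloud}: in particular, the existence of the regions $\Delta_y\subseteq\Delta_E$ and $\Delta_e\subseteq\Delta_E$ follows from $\underline{p_y^{\prime}}$ and $\overline{p}_e^{\prime}$ being integer multiples of $1/{n\choose k}$ together with the sufficient-condition assumptions $\underline{p_y^{\prime}}>\text{Pr}(\mathbf{W}\in\Delta_T)$ and $\overline{p}_e^{\prime}<\text{Pr}(\mathbf{W}\in\Delta_E)$.
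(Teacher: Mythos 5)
Your proposal is correct and matches the paper's (implicit) argument: the corollary is obtained exactly by setting $t=n-r$, so that $\max(n,t)-r=n-r$ and the maximization in Equation~(\ref{certified_condition_pointcloud}) collapses, giving $\frac{{n-r\choose k}}{{n\choose k}}-2\cdot\frac{{n-r\choose k}}{{n\choose k}}=-\frac{{n-r\choose k}}{{n\choose k}}$. Your additional rerun of the Neyman--Pearson construction (with $E=T^*$, $\Delta_{T^*}=\emptyset$) and the monotonicity remark that the constraint is hardest at $t=n-r$ are correct but not needed beyond the substitution; they simply make explicit why the single inequality also covers deletions of fewer than $r$ points.
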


Next, we  show that our derived certified perturbation size is tight, i.e., it is impossible to derive a  certified perturbation size larger than ours if no assumptions  on the point cloud classifier $f$ are made. 
\begin{theorem}[Tightness of certified perturbation size]
\label{tightness}
Suppose we have $\underline{p^{\prime}_y}+\overline{p}^{\prime}_e \leq 1$ and $\underline{p^{\prime}_y}+\sum_{i \neq y}\overline{p}^{\prime}_i \geq 1$. Then, for $\forall r > r^*$, there exist a point cloud classifier $f^*$ which satisfies Equation~(\ref{probability_bound_main}) and an adversarial point cloud $T^*$ such that $g(T^*) \neq y$ or there exist ties. 
\end{theorem}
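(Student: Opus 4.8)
The plan is to follow the standard template for tightness results in randomized smoothing (cf.~\cite{cohen2019certified,jia2019certified,jia2020intrinsic}): turn the failure of the constraint in Equation~(\ref{certified_condition_pointcloud}) into an explicit ``worst-case'' adversarial point cloud $T^*$ together with a ``worst-case'' point cloud classifier $f^*$, and then check the two required properties. Since $r^*$ is the largest $r$ for which the constraint holds, for every $r>r^*$ the constraint fails, so there is an integer $t^{\dagger}\in[n-r,\,n+r]$ with
\[
\frac{\binom{t^{\dagger}}{k}}{\binom{n}{k}}-2\,\frac{\binom{u^{\dagger}}{k}}{\binom{n}{k}}+1-\underline{p_y'}+\overline{p}_e'\;\ge\;0,\qquad u^{\dagger}:=\max(n,t^{\dagger})-r .
\]
First I would let $T^*$ be any point cloud with $|T^*|=t^{\dagger}$ and $|T\cap T^*|=u^{\dagger}$, so that $\eta(T,T^*)=\max(n,t^{\dagger})-u^{\dagger}=r$ and hence $T^*\in\Gamma(T,r)$; one checks $0\le u^{\dagger}\le\min(n,t^{\dagger})$ and adopts the convention $\binom{m}{k}=0$ for $m<k$, which absorbs the degenerate regimes.

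Next I would reuse the region decomposition from the proof of Theorem~\ref{certified_radius_pointcloud}: with $E=T\cap T^*$ (so $|E|=u^{\dagger}$), the joint space $\Phi$ splits into $\Delta_T,\Delta_E,\Delta_{T^*}$, where $\Delta_E$ (the $k$-subsets of $E$) has $\binom{u^{\dagger}}{k}$ elements, each with $\mathbf{W}$-mass $1/\binom{n}{k}$ and $\mathbf{Z}$-mass $1/\binom{t^{\dagger}}{k}$; $\Delta_T$ has $\binom{n}{k}-\binom{u^{\dagger}}{k}$ elements of zero $\mathbf{Z}$-mass; and $\Delta_{T^*}$ has $\binom{t^{\dagger}}{k}-\binom{u^{\dagger}}{k}$ elements of zero $\mathbf{W}$-mass. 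I would then build $f^*$ so as to pile label $y$ onto points where $\mathbf{W}$ is heavy but $\mathbf{Z}$ is light, and label $e$ onto points where $\mathbf{Z}$ is heavy: give label $y$ to all of $\Delta_T$ plus $m_y:=\underline{p_y'}\binom{n}{k}-\big(\binom{n}{k}-\binom{u^{\dagger}}{k}\big)$ arbitrary elements of $\Delta_E$ (keeping the $y$-mass inside $\Delta_T$ only, if $m_y\le0$); give label $e$ all of $\Delta_{T^*}$ plus $\min\!\big(\overline{p}_e'\binom{n}{k},\,\binom{u^{\dagger}}{k}-m_y\big)$ further elements of $\Delta_E$; distribute the rest of $\Delta_E$ among the labels $i\notin\{y,e\}$ with at most $\overline{p}_i'\binom{n}{k}$ each (taking $\overline{p}_e'=\max_{i\ne y}\overline{p}_i'$, consistent with $p_e$ being the second-largest label probability); and label the remaining $\mathbf{W}$- and $\mathbf{Z}$-null $k$-point clouds arbitrarily. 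Every count here is a nonnegative integer because the label probabilities are integer multiples of $1/\binom{n}{k}$ and $\underline{p_y'}+\overline{p}_e'\le1$, and the last distribution step is feasible precisely because $\underline{p_y'}+\sum_{i\ne y}\overline{p}_i'\ge1$.

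Finally I would verify the two claims. For Equation~(\ref{probability_bound_main}): by construction $p_y=\text{Pr}(\mathbf{W}\in\{f^*=y\})=\underline{p_y'}\ge\underline{p_y}$, and each $i\ne y$ gets at most $\overline{p}_i'\binom{n}{k}$ elements of $\Delta_E$ and no $\mathbf{W}$-mass from $\Delta_{T^*}$, so $\max_{i\ne y}p_i\le\overline{p}_e'\le\overline{p}_e$; thus $f^*$ respects Equation~(\ref{probability_bound_main}). For the attack, evaluate under $\mathbf{Z}$: only the $\Delta_E$-part of the $y$-mass survives, so $p_y^*=m_y/\binom{t^{\dagger}}{k}$, while $p_e^*$ picks up all of $\Delta_{T^*}$ plus $e$'s share of $\Delta_E$, which in the generic case $\underline{p_y'}+\overline{p}_e'<1$ equals $\big(\binom{t^{\dagger}}{k}-\binom{u^{\dagger}}{k}+\overline{p}_e'\binom{n}{k}\big)/\binom{t^{\dagger}}{k}$. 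Substituting $m_y$ and clearing $\binom{t^{\dagger}}{k}$, the inequality $p_y^*\le p_e^*$ becomes exactly
\[
0\;\le\;\frac{\binom{t^{\dagger}}{k}}{\binom{n}{k}}-2\,\frac{\binom{u^{\dagger}}{k}}{\binom{n}{k}}+1-\underline{p_y'}+\overline{p}_e',
\]
i.e., the violated constraint at $t=t^{\dagger}$; hence $p_y^*\le\max_{i\ne y}p_i^*$, so $g(T^*)\ne y$ or there is a tie. The boundary case $\underline{p_y'}+\overline{p}_e'=1$ and the degenerate regimes ($m_y\le0$, $u^{\dagger}<k$, $t^{\dagger}<k$) only shrink $p_y^*$ and go through the same way. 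The hard part is not any single estimate but getting this combinatorial bookkeeping exactly right --- ensuring every count is a nonnegative integer, that feasibility of the label assignment uses $\underline{p_y'}+\overline{p}_e'\le1$ and $\underline{p_y'}+\sum_{i\ne y}\overline{p}_i'\ge1$ in a tight way, and that the closing algebraic identity reproduces Equation~(\ref{certified_condition_pointcloud}).
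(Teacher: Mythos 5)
Your proposal is correct and takes essentially the same route as the paper's proof: the paper also starts from the violated certification constraint at the offending $t$, reuses the $\Delta_T$, $\Delta_E$, $\Delta_{T^*}$ decomposition, defines $f^*$ to output $y$ on $\Delta_T$ plus a slice of $\Delta_E$ of $\mathbf{W}$-mass $\underline{p^{\prime}_y}-\Pr(\mathbf{W}\in\Delta_T)$ and $e$ on $\Delta_{T^*}$ plus a disjoint slice of mass $\overline{p}^{\prime}_e$ (remaining labels capped by $\overline{p}^{\prime}_i$, feasible by the two stated assumptions), and then shows $\Pr(f^*(\mathbf{Z})=e)\geq\Pr(f^*(\mathbf{Z})=y)$ is exactly the negated constraint. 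Your version only differs cosmetically, by doing the bookkeeping with explicit element counts, explicitly exhibiting $T^*$ with $|T^*|=t^{\dagger}$ and $|T\cap T^*|=\max(n,t^{\dagger})-r$, and spelling out the degenerate cases that the paper leaves implicit.
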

\begin{proof}
See Section~\ref{proof_of_tightness} in Supplementary Material. 
\end{proof}

\begin{algorithm}[tb]
   \caption{ \textsc{Prediction \& Certification}}
   \label{alg:certify}
\begin{algorithmic}
   \STATE {\bfseries Input:} $f$, $T$, $k$, $N$, and $\alpha$. 
   \STATE {\bfseries Output:} Predicted label and certified perturbation size. \\
   $M_1, M_2,\cdots,M_N \gets  \textsc{RandomSubsample}(T,k)$ \\
   \STATE counts$[i] \gets \sum_{j=1}^{N}\mathbb{I}(f(M_j)=i), i = 1,2,\cdots,c. $ \\
   \STATE ${y}, {e} \gets \text{top two indices in counts}$ \\
   \STATE $\underline{p_{{y}}}, \overline{p}_{{e}} \gets \textsc{ProbBoundEstimation}(\text{counts},\alpha)$ \\
   \IF{$\underline{p_{{y}}}> \overline{p}_{{e}}$}
    \STATE $r^* = \textsc{BinarySearch}(|T|, k, \underline{p_{{y}}}, \overline{p}_{{e}})$
  \ELSE
  \STATE ${y},r^* \gets \text{ABSTAIN},\text{ABSTAIN}$ 
  \ENDIF
  \STATE \textbf{return} ${y},r^*$
\end{algorithmic}
\end{algorithm}

\subsection{Computing the Certified Perturbation Size}
\label{estimation_in_practice}
Given an arbitrary point cloud classifier $f$ and a 3D point cloud $T$,  computing the certified perturbation size $r^*$ requires solving the optimization problem in Equation~(\ref{certified_condition_pointcloud}), which involves the label probability lower bound $\underline{p_y}$ and upper bound $\overline{p}_e$. We develop a Monte-Carlo algorithm to estimate these label probability bounds, with which we solve the optimization problem efficiently via binary search. 

\myparatight{Estimating label probability lower and upper bounds} We first create $N$ random subsampled point clouds from $T$, each of which contains $k$ points. For simplicity, we use $M_1, M_2, \cdots, M_{N}$ to denote them. Then, we  use the point cloud classifier $f$ to predict a label for each subsampled point cloud $M_j$, where $j=1, 2, \cdots, N$. We use $N_i$ to denote the number of subsampled point clouds whose predicted labels are  $i$, i.e., $N_i = \sum_{j=1}^{N}\mathbb{I}(f(M_j)=i)$, where $\mathbb{I}$ is an indicator function. We predict the label with the largest  $N_i$ as the label of the original point cloud $T$, i.e., $y=g(T)=\argmax_{i\in \{1,2,\cdots,c\}} N_i$. 
Moreover, based on the definition of the label probability $p_i$, we know $N_i$ follows a binomial distribution with parameters $N$ and $p_i$, i.e., $N_i \sim \text{Binomial}(N, p_i)$. Therefore, we can use SimuEM~\cite{jia2019certified}, which is based on Clopper-Pearson~\cite{clopper1934use} method, to estimate a lower or upper bound of $p_i$ using $N_i$ and $N$. In particular, we have: 
\begin{align}
\label{prob_bound_est_1}
 &   \underline{p_y} = \text{Beta}(\frac{\alpha}{c}; N_y, N - N_y + 1), \\
 \label{prob_bound_est_2}
      &  \overline{p}_i = \text{Beta}(1-\frac{\alpha}{c}; N_i, N - N_i + 1), \forall i \neq y,
\end{align}
where $1-\alpha$ is the confidence level for simultaneously estimating the $c$ label probability bounds and $\text{Beta}(\tau ; \mu,  \nu)$ is the $\tau$th quantile of the Beta distribution with shape parameters $\mu$ and $\nu$. Both $\max_{i\neq y}\overline{p}_i$ and $1 - \underline{p_y}$ are upper bounds of $\overline{p}_e$. We use the smaller one as $\overline{p}_e$, i.e., we have $\overline{p}_e = \min( \max_{i\neq y}\overline{p}_i, 1 - \underline{p_y} )$, which gives a tighter $\overline{p}_e$. 

\myparatight{Solving the optimization problem} Given the estimated label probability bounds, we can use binary search to solve the optimization problem in Equation~(\ref{certified_condition_pointcloud}) to find the certified perturbation size $r^*$. 

\myparatight{Complete algorithm}
Algorithm~\ref{alg:certify} shows the complete process of our prediction and certification for a 3D point cloud $T$, which outputs our PointGuard's predicted label $y$ and certified perturbation size $r^*$ for $T$. The function \textsc{RandomSubsample} creates $N$ subsampled point clouds from $T$. The function \textsc{ProbBoundEstimation} estimates the label probability lower and upper bounds $\underline{p_{{y}}}$ and $\overline{p}_{{e}}$ with confidence level $1-\alpha$ based on Equation~(\ref{prob_bound_est_1}) and~(\ref{prob_bound_est_2}). The function \textsc{BinarySearch} solves the optimization problem in Equation~(\ref{certified_condition_pointcloud})  to obtain $r^*$ using binary search.  

\subsection{Training the Classifier with Subsampling}
Our PointGuard is built upon a point cloud classifier $f$. In particular, in the standard training process, $f$ is trained on the original point clouds. Our PointGuard uses $f$ to predict the labels for the subsampled point clouds. 
Since the subsampled point clouds have a different distribution from the original point clouds,  $f$ has a low  accuracy on the subsampled point clouds. As a result, PointGuard has suboptimal robustness. To address the issue, 
 we propose to train the point cloud classifier $f$ on subsampled point clouds instead of the original point clouds. Specifically, given a batch of  point clouds from the training dataset, we first create a subsampled point cloud for each  point cloud in the batch, and then we use the batch of subsampled point clouds to update $f$. Our experimental results show that  training with subsampling significantly improves the robustness of PointGuard. 
\section{Experiments}
\subsection{Experimental Setup}

\myparatight{Datasets and models}
We evaluate PointGuard on ModelNet40~\cite{wu20153d} and  ScanNet~\cite{dai2017scannet} datasets. In particular, the ModelNet40 dataset contains 12,311 3D CAD models, each of which is a point cloud comprising 2,048 three dimensional points and  belongs to one of the 40 common object categories. The dataset is splitted into 9,843 training point clouds and 2,468 testing point clouds. ScanNet is an RGB-D video dataset which contains 2.5M views from 1,513 scenes. Following Li et al.~\cite{li2018pointcnn}, we extract 6,263 training point clouds and 2,061 testing point clouds from ScanNet, each of which belongs to one of the 16 object categories and has 2,048 six dimensional points. We adopt PointNet~\cite{qi2017pointnet} and DGCNN~\cite{wang2019dynamic} as the point cloud classifiers for ModelNet40 and ScanNet, respectively. We use publicly available implementations for both PointNet\footnote{https://github.com/charlesq34/pointnet} and DGCNN\footnote{https://github.com/WangYueFt/dgcnn}. 

\begin{figure*}[!t]
 \vspace{-2mm}
\centering
\subfloat[Point perturbation attacks.]{\includegraphics[width=0.24\textwidth]{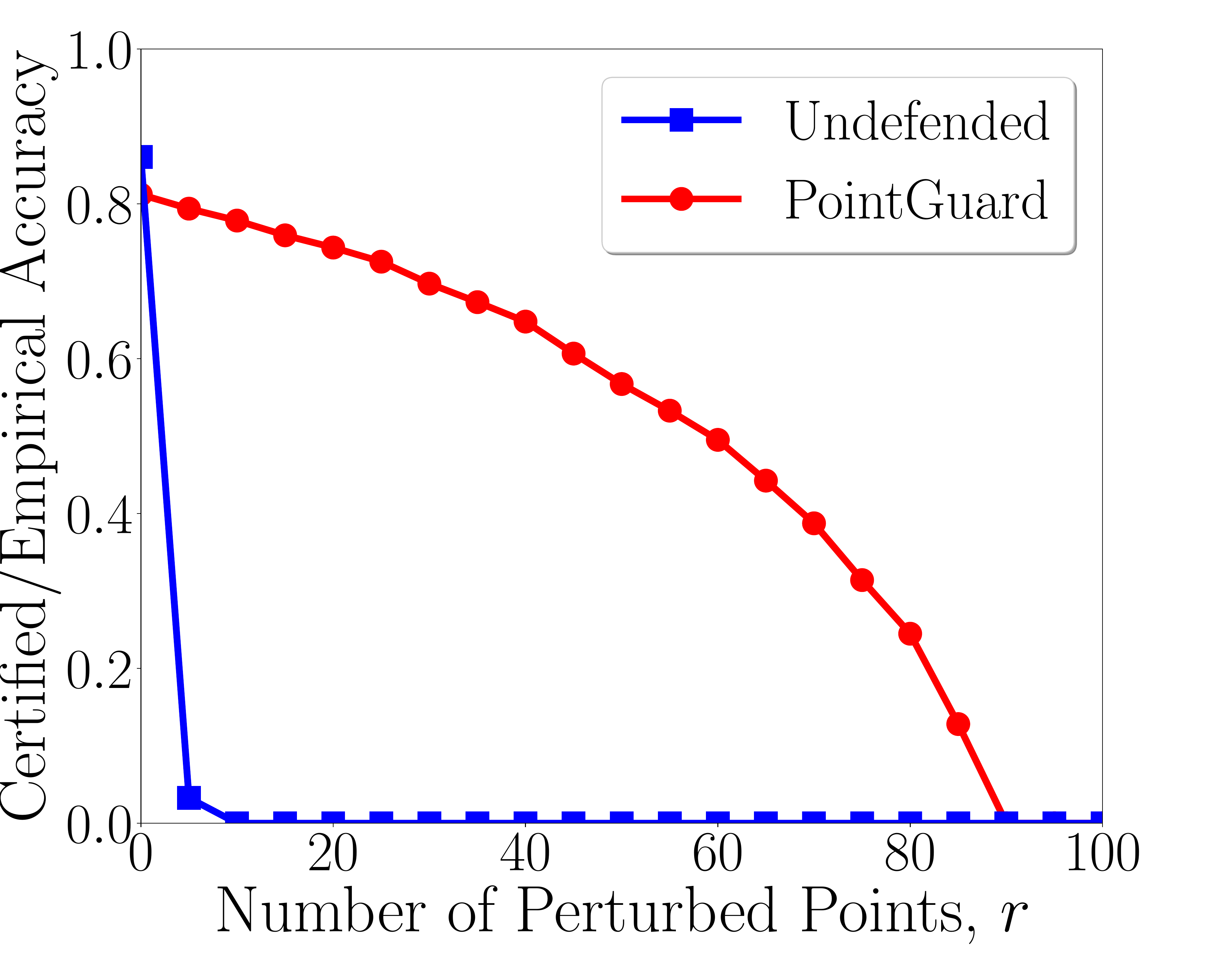}}
\subfloat[Point modification attacks.]{\includegraphics[width=0.24\textwidth]{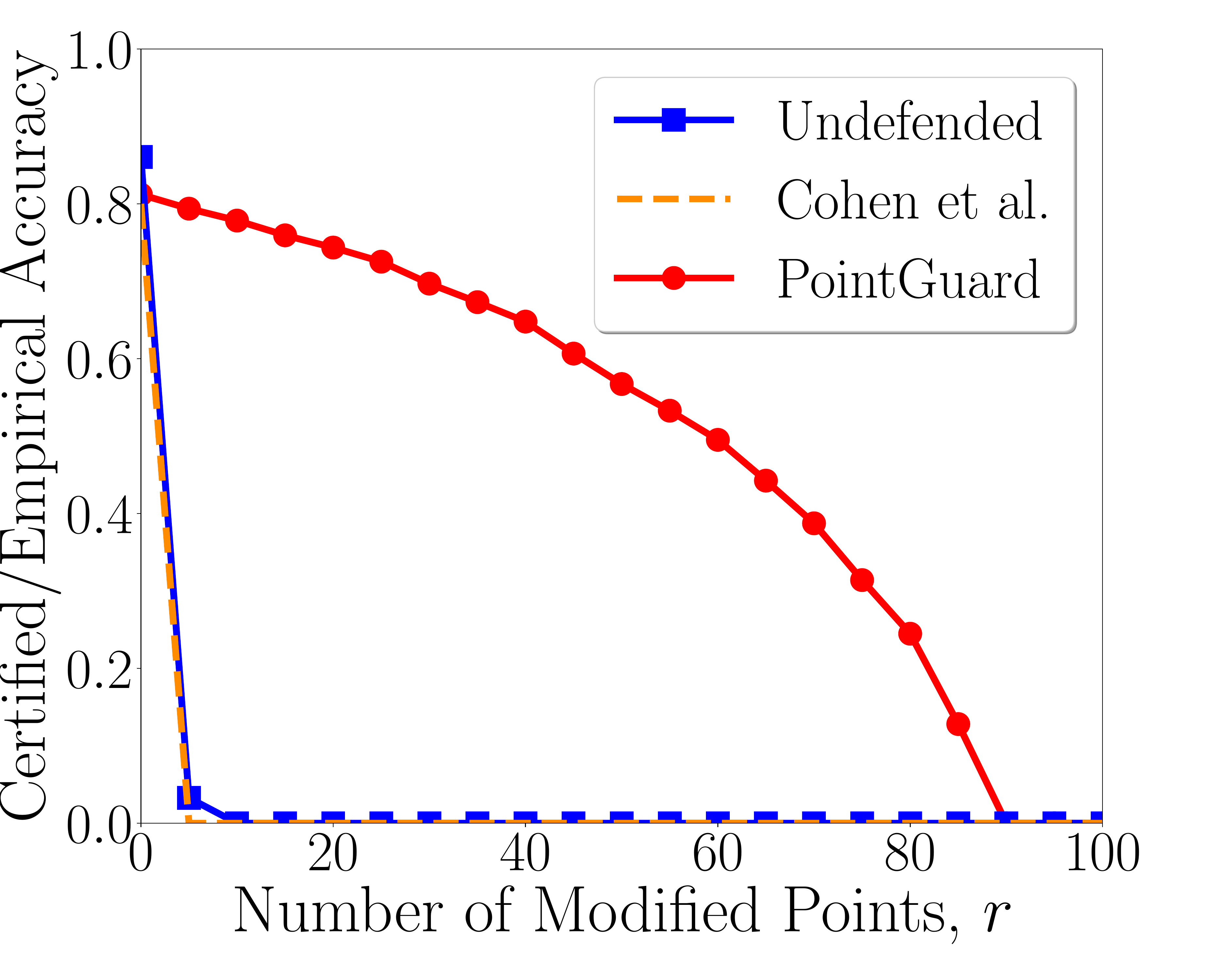}}
\subfloat[Point addition attacks.]{\includegraphics[width=0.24\textwidth]{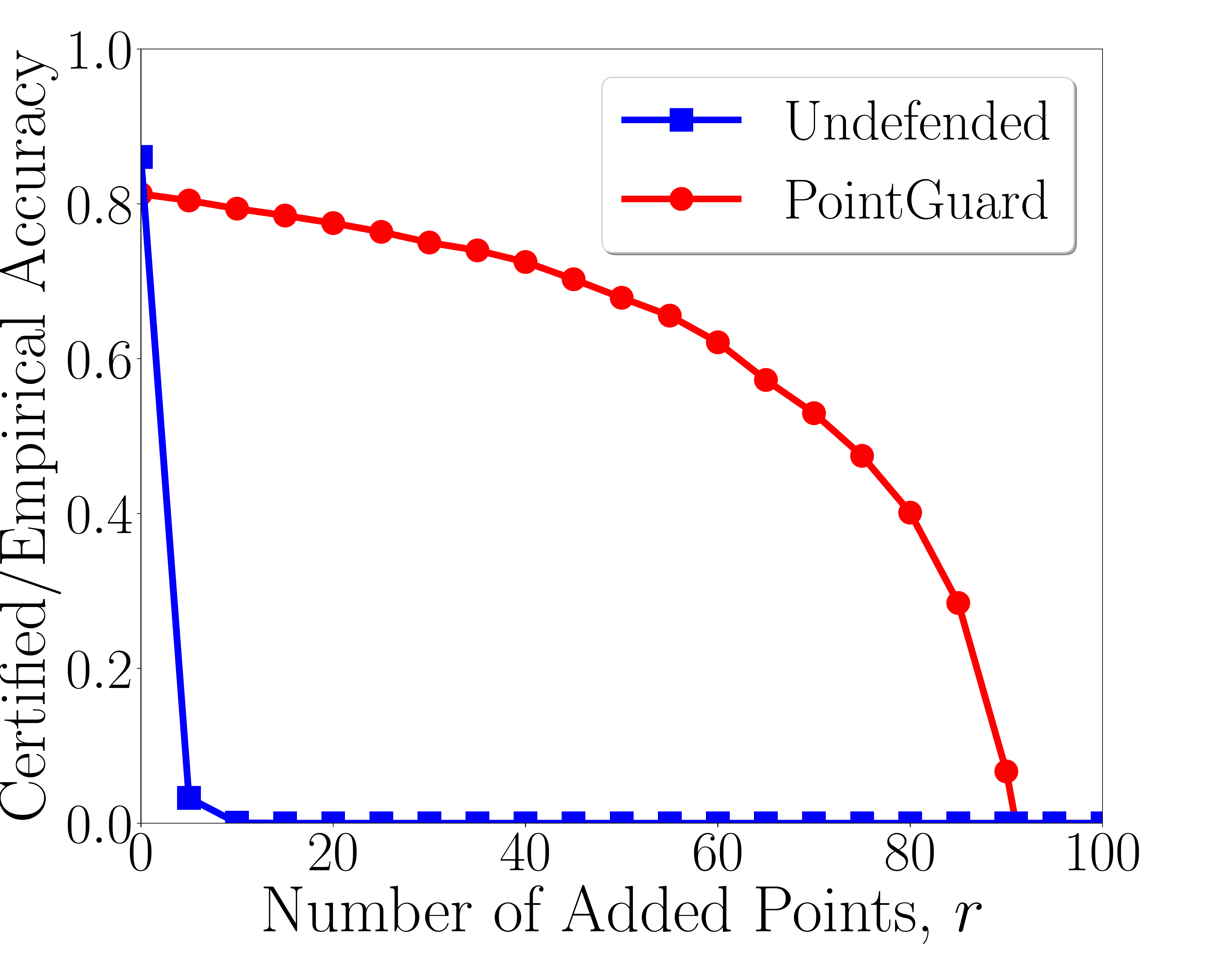}}
\subfloat[Point deletion attacks.]{\includegraphics[width=0.24\textwidth]{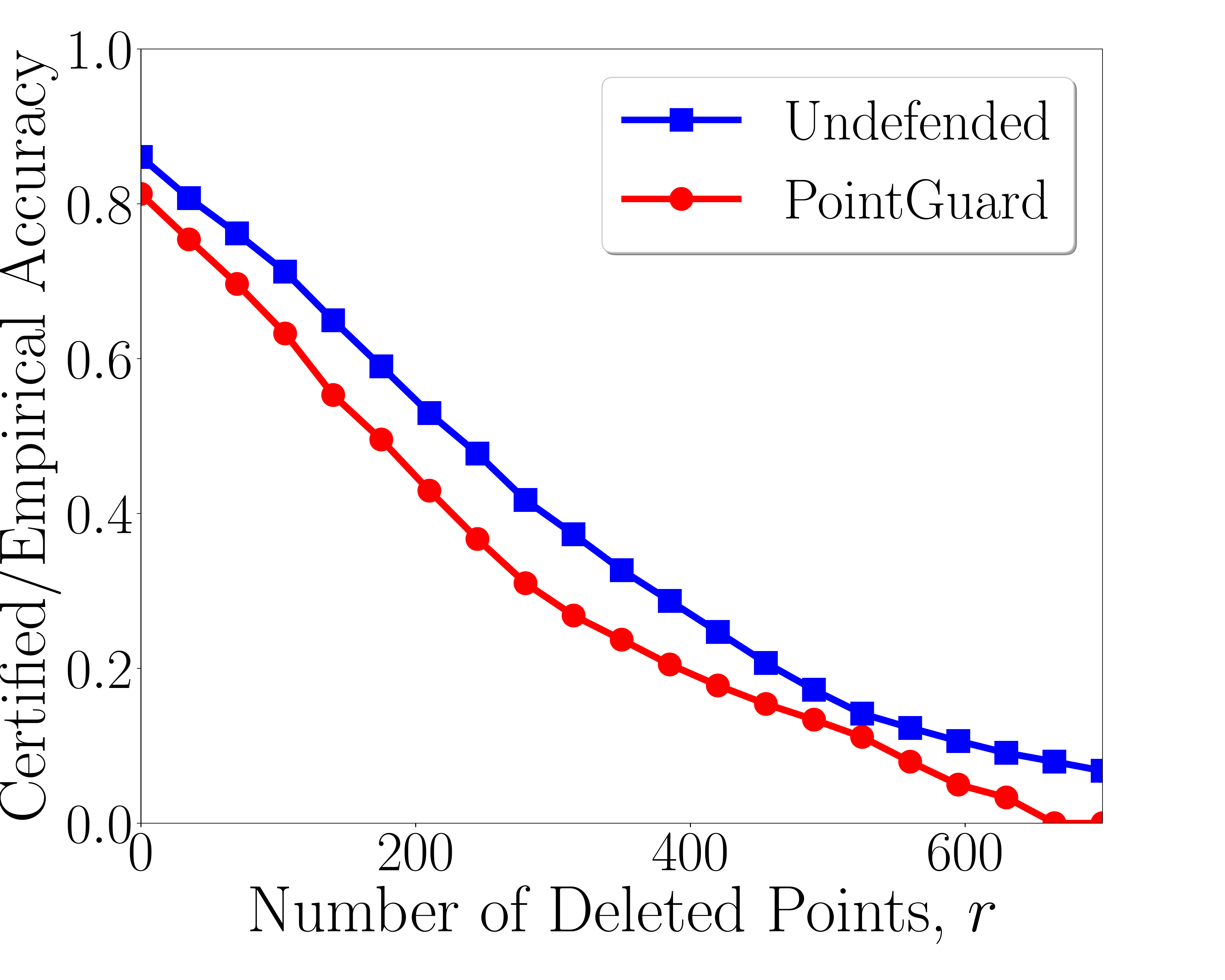}}

\caption{Comparing different methods under different attacks on ModelNet40. The results on ScanNet are in Supplemental Material.}
\label{fig:1}
\vspace{-2mm}
\end{figure*}

\myparatight{Compared methods} We compare our PointGuard with the following methods: 

{\bf Undefended classifier.} We call the standard point cloud classifier \emph{undefended classifier}, e.g., the undefended classifiers are PointNet and DGCNN on the two datasets in our experiments, respectively.

{\bf Randomized smoothing (Cohen et al.)~\cite{cohen2019certified}.} Randomized smoothing adds isotropic Gaussian noise with mean 0 and standard deviation $\sigma$ to an image before using a classifier to classify it. Randomized smoothing provably predicts the same label for the image when the $\ell_2$-norm of the adversarial perturbation added to the image is less than a threshold, which is called \emph{certified radius}. We can generalize randomized smoothing to certify robustness against point modification attacks. In particular, we can add Gaussian noise to each dimension of each point in a point cloud before using a point cloud classifier to classify it, and randomized smoothing provably predicts the same label for the point cloud when the $\ell_2$-norm of the adversarial perturbation is less than the certified radius. Note that our certified perturbation size is the number points that are perturbed by an attacker. Therefore, we transform the certified radius to certified perturbation size as follows. Suppose the points in the point clouds lie in the space $\Theta$, and the $\ell_2$-norm distance between two arbitrary points in the space $\Theta$ is no larger than  $\lambda$, i.e., $\max_{\theta_1, \theta_2}\lnorm{\theta_1 -\theta_2}_2 \leq \lambda$. For instance, $\lambda = 2\sqrt{3}$ for ModelNet40 and $\lambda=\sqrt{15}$ for ScanNet.  Then, we can employ the relationship between $\ell_0$-norm and $\ell_2$-norm to derive the certified perturbation size based on the certified radius returned by randomized smoothing. Specifically, given $\lambda$ and the $\ell_2$-norm certified radius $\delta$ for a point cloud, the certified perturbation size can be computed as $\lfloor \frac{\delta^2}{\lambda^2} \rfloor$. 

\begin{figure*}[!t]
 \vspace{-2mm}
\centering
\subfloat[Training with vs w/o subsampling]{\includegraphics[width=0.24\textwidth]{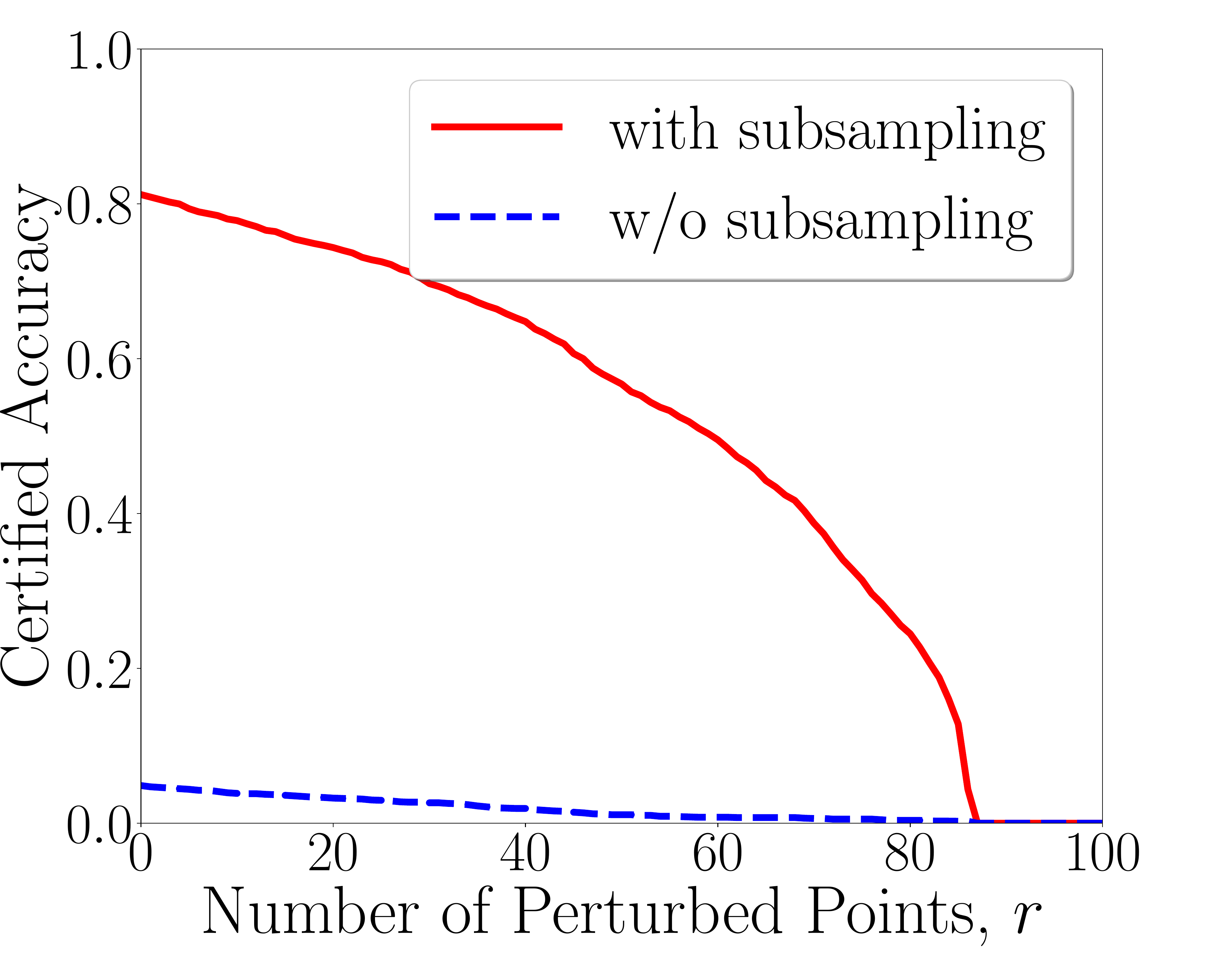}\label{impact_of_subsampling}}
\subfloat[Impact of $k$]{\includegraphics[width=0.24\textwidth]{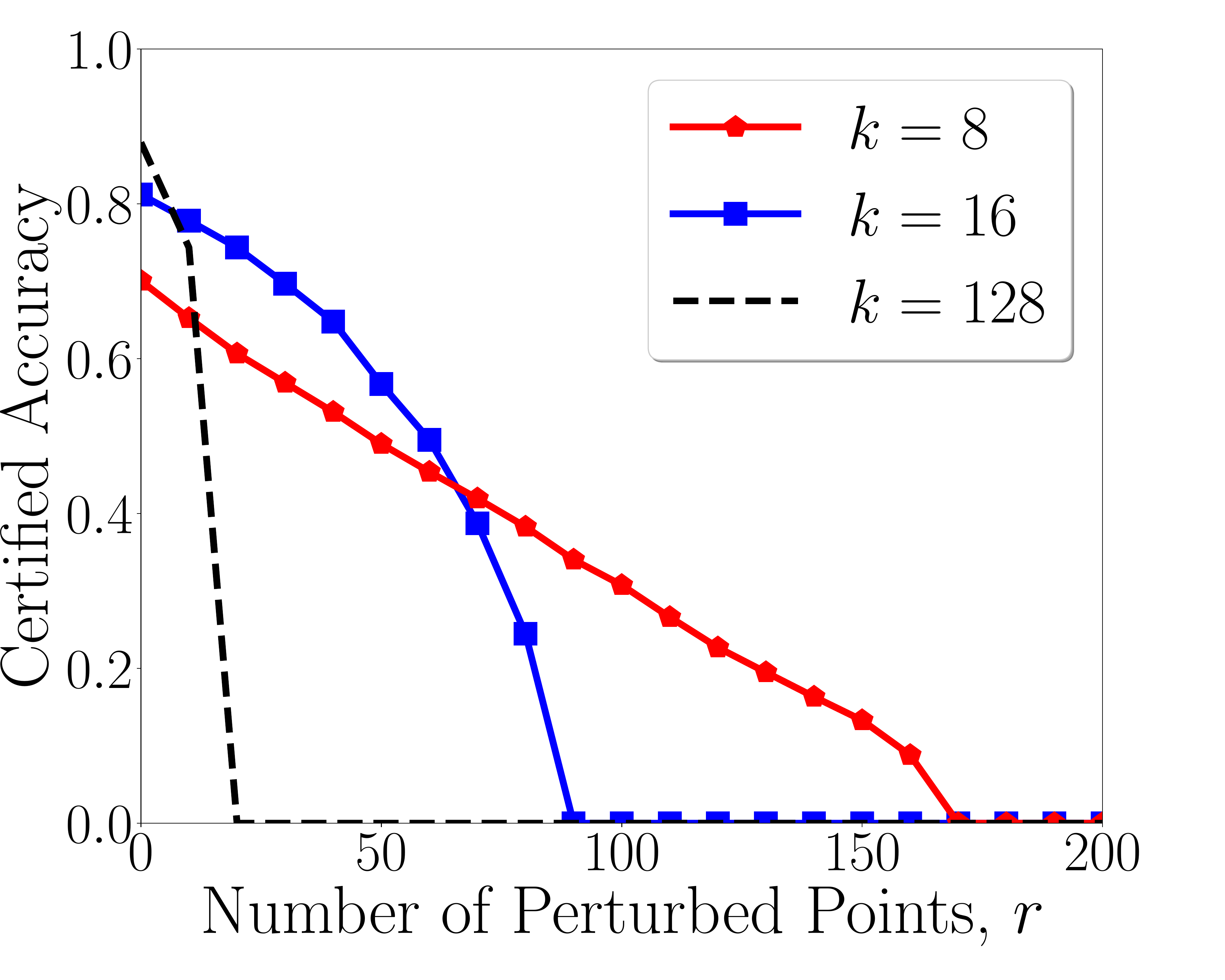}\label{impact_of_k}}
\subfloat[Impact of $\alpha$]{\includegraphics[width=0.24\textwidth]{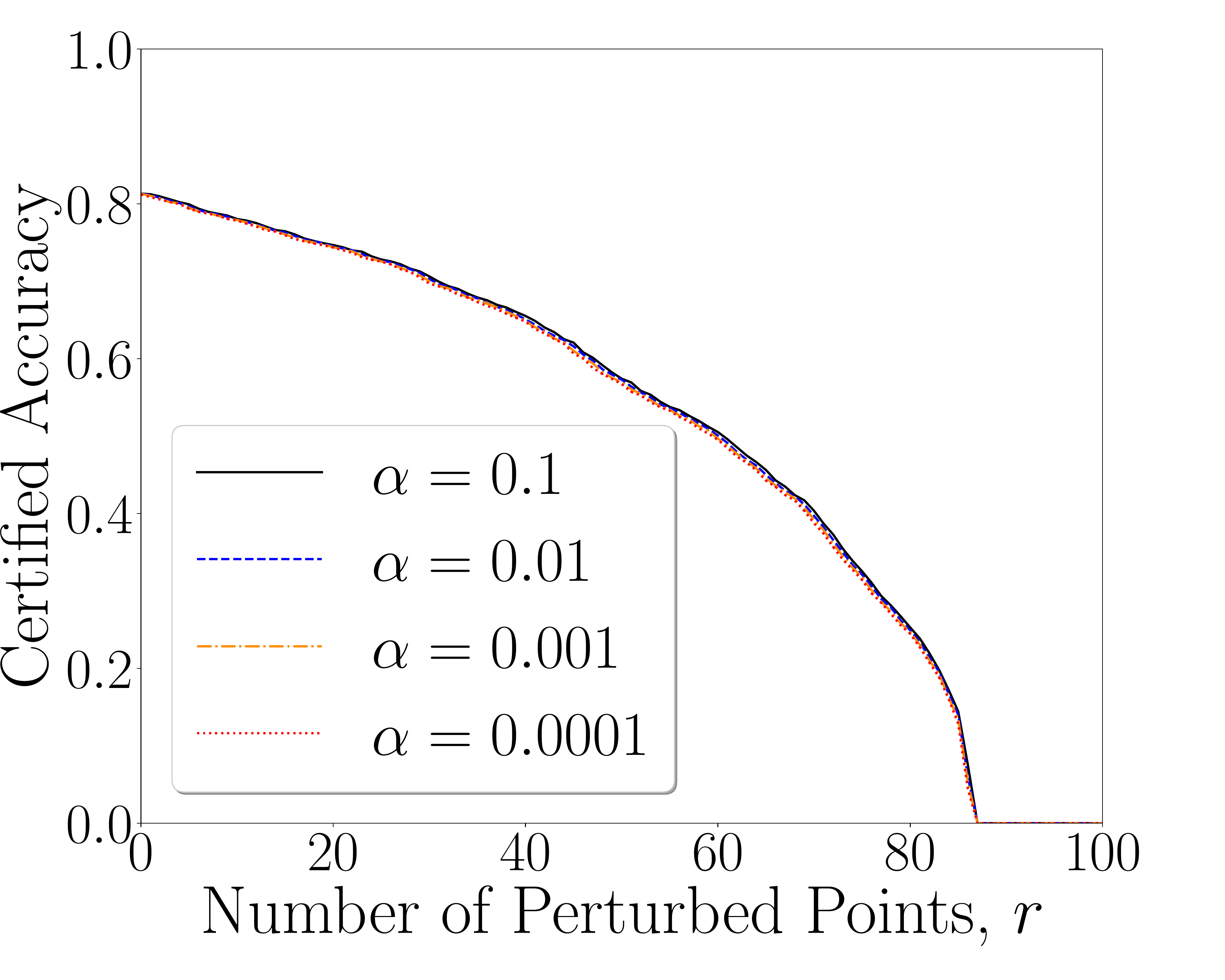}\label{impact_of_alpha}}
\subfloat[Impact of $N$]{\includegraphics[width=0.24\textwidth]{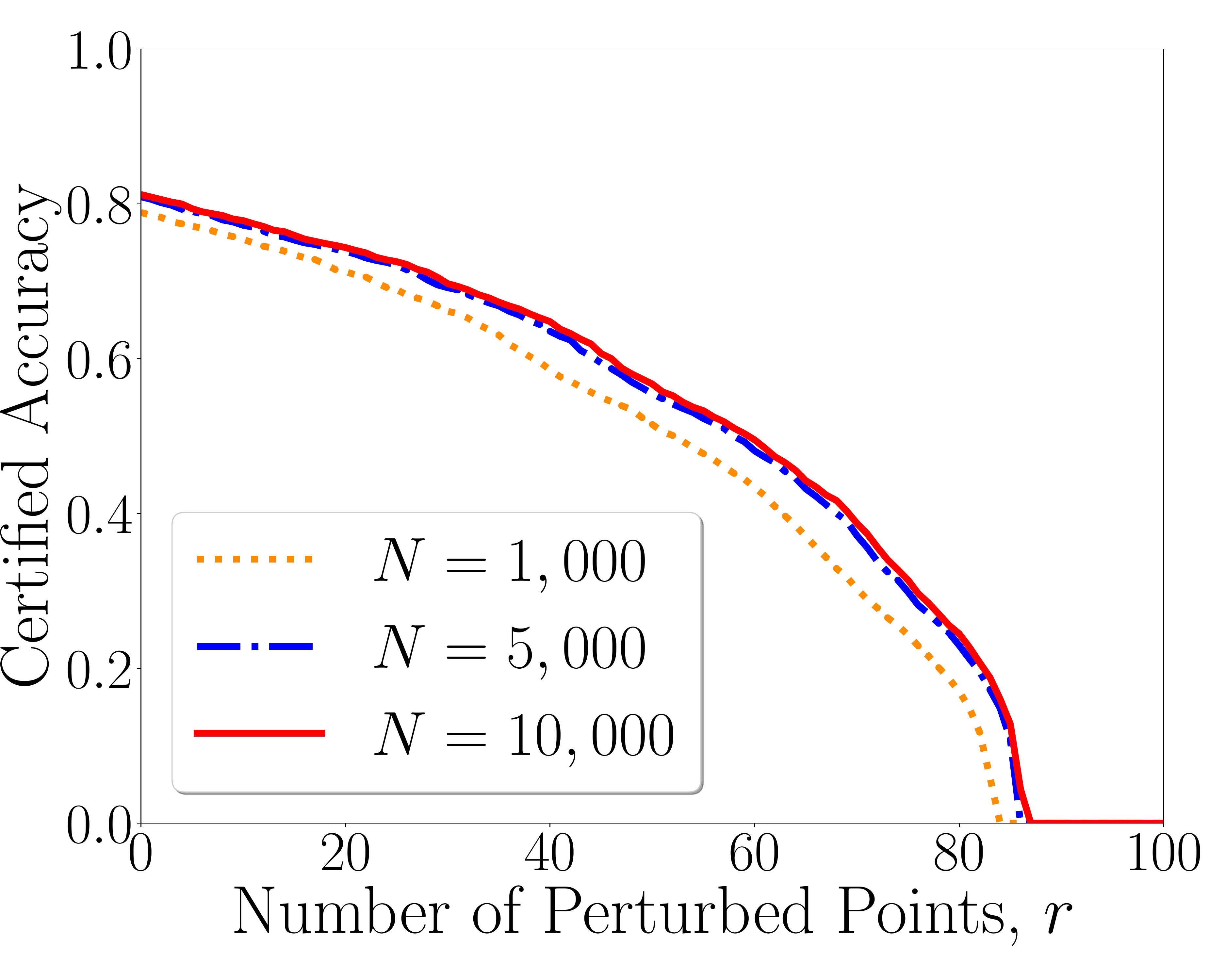}\label{impact_of_N}}

\caption{(a) Training the point cloud classifier with vs. without subsampling. (b), (c), and (d) show the impact of $k$, $\alpha$, and $N$, respectively. The dataset is ModelNet40. The results on ScanNet are in Supplemental Material.}
\label{fig:2}
\vspace{-2mm}
\end{figure*}

\myparatight{Evaluation metric}
PointGuard and randomized smoothing provide certified robustness guarantees, while the undefended classifiers provide empirical robustness. Therefore, we use \emph{certified accuracy}, which has been widely used to measure the certified robustness of a machine learning classifier against adversarial perturbations, to evaluate PointGuard and randomized smoothing; and we use \emph{empirical accuracy} under an emprical attack to evaluate the undefended classifiers. 
In  particular,  the  certified  accuracy  at $r$ perturbed points is the fraction of the testing point clouds whose  labels  are  correctly  predicted  and  whose  certified perturbation sizes are no smaller than $r$. Formally, given a testing set of point clouds $\mathcal{T}=\{(T_o,l_o)\}_{o=1}^{m}$, the certified accuracy at $r$ perturbed points is defined as follows:
\begin{align}
    CA_r = \frac{\sum_{o=1}^m \mathbb{I}\left(l_o=y_{o}\right) \cdot \mathbb{I}\left(r^{*}_o \geq r\right)}{m} ,
\end{align}
where $\mathbb{I}$ is an indicator function, $l_o$ is the true label of the testing point cloud $T_o$, and $y_o$ and $r^*_o$ respectively are the predicted label and the certified perturbation size returned by PointGuard or randomized smoothing for $T_o$. 

We use an empirical attack to calculate the empirical accuracy of an undefended classifier. Specifically, we first use the empirical attack to perturb each point cloud in a testing set, and then we use the undefended classifier to classify the perturbed point clouds and compute the accuracy (called empirical accuracy). 
    We adopt the empirical attacks proposed by Xiang et al.~\cite{xiang2019generating} for the point perturbation, modification, and addition attacks.  We use the empirical attack proposed by Wicker et al.~\cite{wicker2019robustness} for the point deletion attacks. 

We note that the certified accuracy at $r$ is a \emph{lower bound} of the accuracy that  PointGuard or randomized smoothing can achieve no matter how an attacker modifies, adds, and/or deletes at most $r$ points for each point cloud in the testing set, while the empirical accuracy under an empirical attack is an \emph{upper bound} of the accuracy that an undefended classifier can achieve under attacks.

\myparatight{Parameter setting}
Our PointGuard has three parameters: $k$, $1 - \alpha$, and $N$. 
Unless otherwise mentioned, we adopt the following default parameters: $\alpha = 0.0001$, $N = 10,000$, and $k = 16$ for both ModelNet40 and ScanNet. By default, we train the point cloud classifiers with subsampling. We adopt $\sigma = 0.5$ for randomized smoothing so its certified accuracy without attack is similar to that of PointGuard. By default, we consider the point perturbation attacks, as they are the strongest among the four types of attacks. 

\subsection{Experimental Results}
\vspace{-2mm}
\myparatight{Comparing PointGuard with other methods under different attacks}
Figure~\ref{fig:1} (or Figure~\ref{fig:scannet_1} in Supplemental Material) compares PointGuard with other methods under the four types of attacks  on ModelNet40 (or ScanNet), where an undefended classifier is measured by its empirical accuracy, while PointGuard and randomized smoothing are measured by certified accuracy.  We observe that the certified accuracy of PointGuard is slightly lower than the empirical accuracy of an undefended classifier when there are no attacks, i.e., $r=0$. However, for point perturbation, modification, and addition attacks, the empirical accuracy of an undefended classifier quickly drops to 0 while the certified accuracy of PointGuard is still high as $r$ increases. For point deletion attacks, the empirical accuracy of an undefended classifier may be higher than the certified accuracy of PointGuard. This indicates that the existing empirical point deletion attacks are not strong enough. 

Our PointGuard substantially outperforms randomized smoothing for point modification attacks in terms of certified accuracy. Randomized smoothing adds additive noise to a point cloud, while our PointGuard subsamples a point cloud. Our experimental results show that subsampling outperforms additive noise to build provably robust point cloud classification systems. We also observe that the empirical accuracy of the undefended classifier is close to   the certified accuracy of randomized smoothing, indicating that the empirical point modification attacks are strong. 

We also compare  PointGuard with an empirical defense (i.e., DUP-Net~\cite{zhou2019dup}) to measure the gaps between  certified accuracy and empirical accuracy. According to~\cite{zhou2019dup}, on ModelNet40, the empirical accuracy of DUP-Net under a point deletion attack~\cite{zhao2020isometry} is 76.1\%, 67.7\%, and 57.7\% when the attacker deletes 50, 100, and 150
points, respectively. Under the same setting,  PointGuard achieves certified accuracy of 73.4\%, 64.3\%, and 53.5\%, respectively. We observe the gaps between the empirical accuracy (an upper bound of accuracy) of DUP-Net and certified accuracy (a lower bound of accuracy) of PointGuard are small.

\myparatight{Training with vs. without subsampling}
Figure~\ref{impact_of_subsampling} (or Figure~\ref{impact_of_subsampling_1} in Supplemental Material) shows the comparison of the certified accuracy of PointGuard when the point cloud classifier is trained with or without subsampling  on ModelNet40 (or ScanNet). Our experimental results demonstrate that training with subsampling can substantially improve the certified accuracy of PointGuard. The reason is that the point cloud classifier trained with subsampling can more accurately classify the subsampled point clouds. 

\myparatight{Impact of $k$, $\alpha$, and $N$}
Figure~\ref{impact_of_k},~\ref{impact_of_alpha}, and~\ref{impact_of_N} (or Figure~\ref{impact_of_k_1},~\ref{impact_of_alpha_1}, and~\ref{impact_of_N_1} in Supplemental Material) show the impact of $k$, $\alpha$, and $N$ on the certified accuracy of our PointGuard  on ModelNet40 (or ScanNet), respectively. Based on the experimental results, we make the following observations. First, $k$ measures a tradeoff between accuracy without attacks (i.e., $r=0$) and robustness. In particular, a smaller $k$ gives a smaller certified accuracy without attacks, but the certified accuracy drops to 0 more slowly as the number of perturbed points $r$ increases. The reason is that the perturbed points are less likely to be subsampled when $k$ is smaller.  Second, the certified accuracy increases as $\alpha$ or $N$ increases. The reason is that a larger $\alpha$ or $N$ leads to tighter  lower and upper label probability bounds, which in turn lead to larger certified perturbation sizes. We also note that the certified accuracy is insensitive to $\alpha$ and $N$ once they are large enough.  

\section{Conclusion}
In this work, we propose PointGuard, the first provably robust 3D point cloud classification system against various adversarial attacks. We show that PointGuard provably predicts the same label for a testing 3D point cloud when the number of adversarially perturbed points is bounded. Moreover, we prove our bound is tight.  We empirically demonstrate the effectiveness of PointGuard on ModelNet40 and ScanNet benchmark datasets. An interesting future work is to further improve  PointGuard by leveraging the knowledge of the point cloud classifier.

{\small
\bibliographystyle{ieee_fullname}
\bibliography{egbib}
}

\appendices
\clearpage

\section{Proof of Theorem~\ref{certified_radius_pointcloud}}
\label{proof_of_certified_radius_bagging}

Given a point cloud $T$ and its perturbed version $T^*$, we define the following two random variables:
\begin{align}
   & \bm W = S_k(T),~\bm Z = S_k(T^*), 
\end{align}
where $\bm W$ and $\bm Z$ represent the random 3D point clouds with $k$ points subsampled from $T$ and $T^*$ uniformly at random without replacement, respectively. We use $\Phi$ to denote the joint space of $\mathbf{W}$ and $\mathbf{Z}$, where each element is a 3D point cloud with $k$ points subsampled from $T$ or $T^*$. We denote by $E$ the set of intersection points between $T$ and $T^*$, i.e., $E = T \cap T^*$. 

Before proving our theorem, we first describe a variant of the Neyman-Pearson Lemma~\cite{neyman1933ix} that will be used in our proof. The variant is from~\cite{jia2020intrinsic}.

\begin{lemma}[Neyman-Pearson Lemma]
\label{lemma_np}
Suppose $\bm W$ and $\bm Z$ are two random variables in the space $\Phi$ with probability distributions $\sigma_{w}$ and $\sigma_{z}$, respectively. Let $H:\Phi\xrightarrow{} \{0,1\}$ be a random or deterministic function. Then, we have the following:  
\begin{itemize}
\item If $Q_1=\{\varphi\in \Phi:\sigma_{w}(\varphi) > \zeta \cdot \sigma_{z}(\varphi)  \}$ and $Q_2=\{\varphi\in \Phi:\sigma_{w}(\varphi) = \zeta \cdot \sigma_{z}(\varphi)  \}$ for some $\zeta > 0$. Let $Q=Q_1\cup Q_3$, where $Q_3 \subseteq Q_2$. If we have $\text{Pr}(H(\bm W)=1)\geq \text{Pr}(\bm W\in Q)$, then $\text{Pr}(H(\bm Z)=1)\geq \text{Pr}(\bm Z\in Q)$.

\item If $Q_1=\{\varphi\in \Phi:\sigma_{w}(\varphi) < \zeta\cdot \sigma_{z}(\varphi)  \}$ and $Q_2=\{\varphi\in \Phi:\sigma_{w}(\varphi) = \zeta\cdot \sigma_{z}(\varphi)  \}$ for some $\zeta > 0$. Let $Q=Q_1\cup Q_3$, where $Q_3 \subseteq Q_2$. If we have $\text{Pr}(H(\bm W)=1)\leq \text{Pr}(\bm W\in Q)$, then $\text{Pr}(H(\bm Z)=1)\leq \text{Pr}(\bm Z\in Q)$.
\end{itemize}
\end{lemma}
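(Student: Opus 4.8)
The plan is to run the classical Neyman-Pearson likelihood-ratio swapping argument, with the one wrinkle that $H$ may be randomized. First I would pass from $H$ to its acceptance probability $h(\varphi)\triangleq\text{Pr}(H(\varphi)=1)$, a function $\Phi\to[0,1]$. Since $\Phi$ is discrete here (it is the finite collection of $k$-point clouds subsampled from $T$ or $T^*$), the two quantities of interest become plain sums:
\[
\text{Pr}(H(\bm W)=1)=\sum_{\varphi\in\Phi}\sigma_w(\varphi)\,h(\varphi),\qquad \text{Pr}(H(\bm Z)=1)=\sum_{\varphi\in\Phi}\sigma_z(\varphi)\,h(\varphi),
\]
and similarly $\text{Pr}(\bm W\in Q)=\sum_{\varphi}\sigma_w(\varphi)\,\chi_Q(\varphi)$ and $\text{Pr}(\bm Z\in Q)=\sum_{\varphi}\sigma_z(\varphi)\,\chi_Q(\varphi)$, where $\chi_Q$ is the $\{0,1\}$-indicator of $Q$. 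So all four numbers are inner products of $h$ or $\chi_Q$ against $\sigma_w$ or $\sigma_z$, and the lemma reduces to comparing these inner products.

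The heart of the argument is a pointwise inequality. For the first bullet I would establish that
\[
\bigl(\chi_Q(\varphi)-h(\varphi)\bigr)\,\bigl(\sigma_w(\varphi)-\zeta\,\sigma_z(\varphi)\bigr)\ \ge\ 0\qquad\text{for every }\varphi\in\Phi,
\]
via a three-way split on the decomposition $Q=Q_1\cup Q_3$ with $Q_1=\{\sigma_w>\zeta\sigma_z\}$ and $Q_3\subseteq Q_2=\{\sigma_w=\zeta\sigma_z\}$: (i) if $\varphi\in Q_1$ the second factor is strictly positive while $\chi_Q(\varphi)-h(\varphi)=1-h(\varphi)\ge0$; (ii) if $\varphi\in Q_3$ then $\varphi\in Q_2$, so the second factor is exactly $0$; (iii) if $\varphi\notin Q$ then in particular $\varphi\notin Q_1$, hence $\sigma_w(\varphi)\le\zeta\sigma_z(\varphi)$, making the second factor $\le0$, while the first factor is $-h(\varphi)\le0$. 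In every case the product is nonnegative. Summing over $\varphi$ and expanding gives $\text{Pr}(\bm W\in Q)-\text{Pr}(H(\bm W)=1)\ge\zeta\bigl(\text{Pr}(\bm Z\in Q)-\text{Pr}(H(\bm Z)=1)\bigr)$; the hypothesis $\text{Pr}(H(\bm W)=1)\ge\text{Pr}(\bm W\in Q)$ makes the left side $\le0$, and $\zeta>0$ then forces $\text{Pr}(H(\bm Z)=1)\ge\text{Pr}(\bm Z\in Q)$. The second bullet is the mirror image: with $Q_1=\{\sigma_w<\zeta\sigma_z\}$ the same case analysis yields $(\chi_Q-h)(\sigma_w-\zeta\sigma_z)\le0$ pointwise, and combining the summed inequality with $\text{Pr}(H(\bm W)=1)\le\text{Pr}(\bm W\in Q)$ and $\zeta>0$ gives $\text{Pr}(H(\bm Z)=1)\le\text{Pr}(\bm Z\in Q)$.

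I do not expect a genuine obstacle: the argument is short once the sign of that product is noticed. The two places that need a little care are (a) treating randomized and deterministic $H$ uniformly by replacing the event $\{H(\bm W)=1\}$ with the expectation of $h$, so everything becomes a linear functional of $\sigma_w$ or $\sigma_z$; and (b) handling the boundary set $Q_2$ cleanly — since $Q_3$ may be an arbitrary subset of $Q_2$, one must use that $\sigma_w-\zeta\sigma_z$ vanishes identically on $Q_2$, so a point of $Q_2$ contributes $0$ to the sum whether or not it was placed in $Q$. (If the statement were ever needed for a continuous $\Phi$, the sums become integrals against a common dominating measure and the same case split applies; the discrete version is all that PointGuard uses.)
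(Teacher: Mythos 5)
Your argument is correct: the pointwise sign analysis of $(\chi_Q(\varphi)-h(\varphi))(\sigma_w(\varphi)-\zeta\sigma_z(\varphi))$ on $Q_1$, $Q_3\subseteq Q_2$, and $\Phi\setminus Q$, followed by summing and invoking $\zeta>0$, is exactly the standard likelihood-ratio swapping proof of this lemma, and both bullets go through as you describe (the randomized-$H$ case being handled correctly via $h(\varphi)=\text{Pr}(H(\varphi)=1)$). The paper itself does not prove this lemma but defers to Jia et al.~\cite{jia2020intrinsic}, whose proof is this same argument, so your write-up is a correct, self-contained version of essentially the same approach.
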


\begin{proof}
Please refer to~\cite{jia2020intrinsic}.
\end{proof}

Next, we formally prove our Theorem~\ref{certified_radius_pointcloud}. Our proof is inspired by previous work~\cite{jia2019certified,jia2020intrinsic}. Roughly speaking, the idea is to derive the label probability lower and upper bounds via computing the probability of random variables in certain regions crafted by the variant of Neyman-Pearson Lemma. However, due to the difference in sampling methods, our space divisions are significantly different from previous work~\cite{jia2020intrinsic}. Recall that we denote $p_i =  \text{Pr}(f(\mathbf{W})=i)$ and $p_i^{*} = \text{Pr}(f(\mathbf{Z})=i)$, where $i \in \{1, 2, \cdots, c\}$. We denote $y = \argmax_{i=\{1,2,\dots,c\}} p_i$. Our goal is to find the maximum $r^*$ such that $y =  \argmax_{i=\{1, 2, \cdots, c\}} p_i^{*}$, i.e., $p_{y}^{*} > p_e^* = max_{i \neq y} p_{i}^{*}$, for $\forall T^{*} \in \Gamma(T, r^*)$.  Our key step is to derive a lower bound of $p_{y}^{*} $ and an upper bound of $p_e^* = max_{i \neq y} p_{i}^{*}$ via Lemma~\ref{lemma_np}. Given these probability bounds, we can find the maximum $r^*$ such that the lower bound of $p_{y}^{*} $ is larger than the upper bound of $p_e^*$. 

\myparatight{Dividing the space $\Phi$} We first divide the space $\Phi$  into three regions which are as follows: 
\begin{align}
\Delta_T~=\{\varphi\in\Phi & |\varphi \subseteq T, \varphi\not\subseteq E  \}, \\
 \Delta_{T^*}=\{\varphi\in\Phi& |\varphi \subseteq T^*, \varphi\not\subseteq E  \}, \\
\Delta_E~=\{\varphi\in\Phi & | \varphi \subseteq E  \}, 
\end{align}
where $\Delta_E$ consists of the subsampled point clouds that can be obtained by subsampling $k$ points from $E$; and $\Delta_T$ (or $\Delta_{T^{*}}$) consists of the subsampled point clouds that are subsampled from $T$ (or $T^*$) but do not belong to $\Delta_E$. 
Since $\bm W$ and $\bm Z$, respectively, represent the random 3D point clouds with $k$ points subsampled from $T$  and $T^*$ uniformly at random without replacement, we have the following probability mass functions: 
\begin{align}
&\text{Pr}(\bm W = \varphi) = 
\begin{cases}
 \frac{1}{{n \choose k}}, &\text{ if } \varphi\in \Delta_T  \cup \Delta_E, \\
 0, &\text{ otherwise},
\end{cases} \\
&\text{Pr}(\bm Z = \varphi) = 
\begin{cases}
 \frac{1}{{t \choose k}}, &\text{ if } \varphi\in \Delta_{T^*} \cup \Delta_E, \\
 0, &\text{ otherwise},
\end{cases}
\end{align}
where $t$ is the number of points in $T^{*}$ (i.e., $t = |T^*|$). We use $s$ to denote the number of intersection points between $T$ and $T^*$, i.e., $s = |E|=|T \cap T^*|$. Then, the size of $\Delta_E$ is ${s \choose k}$, i.e., $|\Delta_E|={s \choose k}$. Given the size of $\Delta_E$, we have the following probabilities: 
{\small 
\begin{align}
\label{probability_equation_1_bagging}
&\text{Pr}(\bm W \in \Delta_E)= \frac{{s \choose k}}{{n \choose k}}, \\ &\text{Pr}(\bm W \in \Delta_T)=1-  \frac{{s \choose k}}{{n \choose k}}, \\ &\text{Pr}(\bm W \in \Delta_{T^*})=0. \\
&\text{Pr}(\bm Z \in \Delta_E)=  \frac{{s \choose k}}{{t \choose k}} ,\\ &\text{Pr}(\bm Z \in \Delta_{T^*})=1-  \frac{{s \choose k}}{{t \choose k}},\\
\label{probability_equation_2_bagging}
&\text{Pr}(\bm Z \in \Delta_T)=0.
\end{align}
}
We have $\text{Pr}(\bm W \in \Delta_E)= \frac{{s \choose k}}{{n \choose k}}$  because $\text{Pr}(\bm W \in \Delta_E)= \frac{|\Delta_E|}{|\Delta_T \cup \Delta_E|} = \frac{{s \choose k}}{ {n \choose k} }$. Since $\text{Pr}(\bm W \in \Delta_T)+ \text{Pr}(\bm W \in \Delta_E) = 1$, we have $\text{Pr}(\bm W \in \Delta_T)=1-  \frac{{s \choose k}}{{n \choose k}}$. We have $\text{Pr}(\bm W \in \Delta_{T^*})=0$ because $\bm W $ is subsampled from $T$, which does not contain any points from $ T^* \setminus E$. Similarly, we can compute the probabilities of random variable $\bm Z$ in those regions. 

Based on the fact that $p_y$ and $p_i (i \neq y)$ should be integer multiples of $1/{n \choose k}$, we derive the following bounds:
\begin{align}
\label{label_probability_bound_app_1}
& \underline{p^{\prime}_y} \triangleq \frac{\lceil \underline{p_y}\cdot{n \choose k} \rceil}{{n \choose k}} \leq  \text{Pr}(f(\mathbf{W})= y), \\
\label{label_probability_bound_app_2}
& \overline{p}^{\prime}_i \triangleq \frac{\lfloor \overline{p}_i \cdot{n \choose k} \rfloor}{{n \choose k}}  \geq \text{Pr}(f(\mathbf{W})=i), \forall i \neq y.
\end{align}

{\bf Deriving a lower bound of $p^*_y$:} We define a binary function $H_y(\varphi)= \mathbb{I}(f(\varphi)=y)$, where $\varphi\in \Phi$ and $\mathbb{I}$ is an indicator function. Then, we have the following  based on the definitions of the random variable $\bm Z$ and the function $H_y$: 
\begin{align}
p^*_y=\text{Pr}(f(\bm Z)=y)=\text{Pr}(H_y(\bm Z)=1).
\end{align}

Our idea is to find a region such that we can apply Lemma~\ref{lemma_np} to derive a lower bound of $\text{Pr}(H_y(\bm Z)=1)$. 
We assume $\underline{p^{\prime}_y} - \left(1- \frac{{s \choose k}}{{n \choose k}}\right)\geq 0$.  We can make this assumption because we only need to find a sufficient condition. Then, we can find a region $\Delta_y \subseteq \Delta_E$ satisfying the following: 
\begin{align}
\label{definition_of_aprime_pointcloud_1}
   & \text{Pr}(\bm W\in\Delta_y) \\
    =& \underline{p^{\prime}_y} - \text{Pr}(\bm W\in\Delta_T) \\ 
    \label{definition_of_aprime_pointcloud_2}
    =& \underline{p^{\prime}_y}- \left(1- \frac{{s \choose k}}{{n \choose k}}\right). 
\end{align}
We can find the region $\Delta_y$ because $\underline{p^{\prime}_y}$ is an integer multiple of $\frac{1}{{n \choose k}}$. Given the region $\Delta_y$, we define the following region: 
\begin{align}
\label{definition_of_r_bagging}
    \mathcal{A} = \Delta_T\cup \Delta_y.
\end{align}
Then, based on Equation~(\ref{label_probability_bound_app_1}), we have:
\begin{align}
  \text{Pr}(f(\bm W) = y )   \geq \underline{p^{\prime}_y}= \text{Pr}(\bm W \in \mathcal{A}).
\end{align}
We can establish the following based on the definition of $\bm W$: 
\begin{align}
\label{lemma_np_condition_ab_bagging}
  \text{Pr}(H_y(\bm W)=1)=   \text{Pr}(f(\bm W)=y) \geq \text{Pr}(\bm W \in \mathcal{A}). 
\end{align}
Furthermore, we have $\text{Pr}(\bm W = \varphi) > \epsilon  \cdot \text{Pr}(\bm Z = \varphi)$ if and only if $\varphi\in \Delta_T$ and $\text{Pr}(\bm W = \varphi) = \epsilon  \cdot \text{Pr}(\bm Z = \varphi)$ if $\varphi\in \Delta_y$, where $\epsilon = \frac{{t \choose k}}{{n \choose k}}$. Therefore, based on the definition of $\mathcal{A}$ in Equation~(\ref{definition_of_r_bagging}) and the condition in Equation~(\ref{lemma_np_condition_ab_bagging}), we obtain the following by applying Lemma~\ref{lemma_np}: 
\begin{align}
\text{Pr}(H_y(\bm Z)=1) \geq \text{Pr}(\bm Z \in \mathcal{A}).
\end{align}
Since we have $p^*_y = \text{Pr}(H_y(\bm Z)=1)$, $\text{Pr}(\bm Z \in \mathcal{A})$ is a lower bound of $p^*_y$ and can be computed as follows: 
\begin{align}
 &\text{Pr}(\bm Z \in \mathcal{A}) \\
 \label{probability_vk_in_r_1_bagging}
 =& \text{Pr}(\bm Z \in \Delta_T) + \text{Pr}(\bm Z \in \Delta_y) \\
  \label{probability_vk_in_r_2_bagging}
 =& \text{Pr}(\bm Z \in \Delta_y) \\
  \label{probability_vk_in_r_3_bagging}
 =& \text{Pr}(\bm W \in \Delta_y)/\epsilon  \\
  \label{probability_vk_in_r_4_bagging}
 =& \frac{1}{\epsilon }\cdot \left(\underline{p^{\prime}_y} -  \left(1- \frac{{s \choose k}}{{n \choose k}}\right)\right).
\end{align}
We have Equation~(\ref{probability_vk_in_r_2_bagging}) from~(\ref{probability_vk_in_r_1_bagging}) because $\text{Pr}(\bm Z \in \Delta_T)=0$, Equation~(\ref{probability_vk_in_r_3_bagging}) from~(\ref{probability_vk_in_r_2_bagging}) as $\text{Pr}(\bm W = \varphi) = \epsilon  \cdot \text{Pr}(\bm Z = \varphi)$ for $\varphi\in \Delta_y$, and the last Equation from Equation~(\ref{definition_of_aprime_pointcloud_1})~-~(\ref{definition_of_aprime_pointcloud_2}).

{\bf Deriving an upper bound of $\max_{i\neq y} p^*_i$:} 
We leverage the second part of Lemma~\ref{lemma_np} to derive an upper bound of $\max_{i\neq y} p^*_i$. We assume $\text{Pr}(\bm W \in\Delta_E) > \overline{p}^{\prime}_i $, $\forall i \in \{1,2,\cdots,c\}\setminus \{y\}$. We can make the assumption because we aim to  derive a sufficient condition.
For $\forall i \in \{1,2,\cdots,c\}\setminus \{y\}$, we can find a region $\Delta_i\subseteq \Delta_E$ such that we have the following: 
\begin{align}
\label{c_j_condition_in_certify}
    \text{Pr}(\bm W \in\Delta_i) = \overline{p}^{\prime}_i. 
\end{align}
We can find the region because $\overline{p}^{\prime}_i $ is an integer multiple of $\frac{1}{{n \choose k}}$. Given region $\Delta_i$, we define the following region: 
\begin{align}
\label{definition_of_q_bagging}
    \mathcal{B}_i = \Delta_i \cup \Delta_{T^*} .
\end{align}
For $\forall i \in \{1, 2, \cdots, c\}\setminus \{y\}$, we define a function $H_i(\varphi)= \mathbb{I}(f(\varphi)=i)$, where $\varphi\in \Phi$. Then, based on Equation~(\ref{label_probability_bound_app_2}) and the definition of random variable $\bm W$, we have:
\begin{align}
\label{region_delta_i_condition_app}
\text{Pr}(H_i(\bm W)=1) =  \text{Pr}(f(\bm W) = i)   \leq \overline{p}^{\prime}_i = \text{Pr}(\bm W \in \mathcal{B}_i). 
\end{align}
We note that $\text{Pr}(\bm W = \varphi) < \epsilon  \cdot \text{Pr}(\bm Z = \varphi)$ if and only if $\varphi\in \Delta_{T^*}$ and $\text{Pr}(\bm W = \varphi) = \epsilon  \cdot \text{Pr}(\bm Z = \varphi)$ if $\varphi\in \Delta_i$, where $\epsilon = \frac{{t \choose k}}{{n \choose k}}$.
Based on  the definition of random variable $\bm Z$, Equation~(\ref{region_delta_i_condition_app}), and Lemma~\ref{lemma_np}, we have the following: 
\begin{align}
    \text{Pr}(H_i (\bm Z)=1) \leq \text{Pr}(\bm Z \in \mathcal{B}_i).
\end{align}
Since we have $p^*_i =\text{Pr}(f(\bm Z)=i)= \text{Pr}(H_i (\bm Z)=1)$, $\text{Pr}(\bm Z \in \mathcal{B}_i)$ is an upper bound of $p^*_i$ and  can be computed as follows: 
\begin{align}
 &\text{Pr}(\bm Z \in \mathcal{B}_i) \\
 =& \text{Pr}(\bm Z \in \Delta_i) + \text{Pr}(\bm Z \in \Delta_{T^*})  \\
 =& \text{Pr}(\bm Z \in \Delta_i)+ 1- \frac{{s \choose k}}{{t\choose k}}   \\
 =& \text{Pr}(\bm W \in \Delta_i)/\epsilon + 1- \frac{{s \choose k}}{{t\choose k}}   \\
 =& \frac{1}{\epsilon }\cdot \overline{p}^{\prime}_i + 1- \frac{{s \choose k}}{{t\choose k}} .
\end{align}
By considering all possible $i$ in the set $\{1,2,\cdots,c\}\setminus \{y\}$, we have: 
\begin{align}
 &\max_{i\neq y} p^*_i \\
 \leq & \max_{i\neq y} \text{Pr}(\bm Z \in \mathcal{B}_i) \\
 =&  \frac{1}{\epsilon }\cdot \max_{i\neq y} \overline{p}^{\prime}_i + 1- \frac{{s \choose k}}{{t\choose k}}  \\
 \leq &  \frac{1}{\epsilon }\cdot \overline{p}^{\prime}_e + 1- \frac{{s \choose k}}{{t\choose k}} ,
\end{align}
where $\overline{p}^{\prime}_e \geq \max\limits_{i \neq y} \overline{p}^{\prime}_i$. 

{\bf Deriving the certified perturbation size:} To reach our goal $\text{Pr}(f(\bm Z)=y) > \max\limits_{i\neq y}\text{Pr}(f(\bm Z)=i)$, it is sufficient to have the following:   
\begin{align}
  &  \frac{1}{\epsilon }\cdot \left(\underline{p^{\prime}_y} - \left(1- \frac{{s \choose k}}{{n \choose k}}\right)\right) 
> \frac{1}{\epsilon }\cdot \overline{p}^{\prime}_e + 1- \frac{{s \choose k}}{{t\choose k}} \\
\label{certify_poisoning_size_condition_proof_final_check_pre}
 \Longleftrightarrow &  \frac{{t \choose k}}{{n \choose k}} -2\cdot \frac{{s \choose k}}{{n \choose k}} + 1 - \underline{p^{\prime}_y} + \overline{p}^{\prime}_e < 0.
\end{align}
Since Equation~(\ref{certify_poisoning_size_condition_proof_final_check_pre}) should be satisfied for all possible perturbed point cloud $T^*$  (i.e., $n-r \leq t \leq  n+r$), we have the following sufficient condition: 
\begin{align}
\label{certify_poisoning_size_condition_proof_final_check}
  \max_{n-r \leq t \leq n+r} \frac{{t \choose k}}{{n \choose k}} -2\cdot \frac{{s \choose k}}{{n \choose k}} + 1 - \underline{p^{\prime}_y} + \overline{p}^{\prime}_e  < 0. 
\end{align}
When the above Equation~(\ref{certify_poisoning_size_condition_proof_final_check}) is satisfied, we have $\underline{p^{\prime}_y} - \left(1- \frac{{s \choose k}}{{n \choose k}}\right)\geq 0$ and $\text{Pr}(\bm W \in\Delta_E) =\frac{{s \choose k}}{{n \choose k}} \geq \overline{p}^{\prime}_i ,\forall i  \in \{1,2,\cdots,c\}\setminus \{y\}$, which are the conditions that we rely on to construct the region $\Delta_y$ and $\Delta_i (i \neq y)$. The certified perturbation size $r^{*}$ is the maximum $r$ that satisfies the above sufficient condition. 
Note that $s = \max(n,t) - r$. Then, our certified perturbation size $r^{*}$ can be derived by solving the following optimization problem: 
\begin{align}
r^{*} & =  \argmax_{r} r \nonumber \\
\label{certified_condition_bagging_proof}
& \text{s.t.} \max_{n-r \leq t \leq n+r} \frac{{t \choose k}}{{n \choose k}} -2\cdot \frac{{\max(n,t) - r \choose k}}{{n \choose k}} + 1 - \underline{p^{\prime}_y} + \overline{p}^{\prime}_e < 0.  
\end{align}

\section{Proof of Theorem~\ref{tightness}}
\label{proof_of_tightness}
Similar to previous work~\cite{cohen2019certified,jia2019certified,jia2020intrinsic}, we show the tightness of our bounds via constructing a counterexample.
In particular, when $r > r^*$, we will show that we can construct a point cloud $T^{*}$ and a point cloud classifier $f^*$  which satisfies the Equation~(\ref{probability_bound_main}) such that the label $y$ is not predicted by our PointGuard or there exist ties. Since $r^*$ is the maximum value that satisfies the Equation~(\ref{certified_condition_bagging_proof}), there exists a point cloud $T^{*}$ satisfying the following when $r > r^*$:
{\small 
\begin{align}
&\frac{{t \choose k}}{{n \choose k}} -2\cdot \frac{{\max(n,t)-r \choose k}}{{n \choose k}} + 1 - \underline{p^{\prime}_y} + \overline{p}^{\prime}_e  \geq 0\\
\Longleftrightarrow & \frac{{t \choose k}}{{n\choose k }}  - \frac{{\max(n,t)-r \choose k}}{{n\choose k }} + \overline{p}^{\prime}_e \geq \underline{p^{\prime}_y} - \left(1-\frac{{\max(n,t)-r \choose k}}{{n\choose k }}\right)   \\
 \label{tight_main}
\Longleftrightarrow & \frac{\overline{p}^{\prime}_e}{\epsilon } + 1- \frac{{\max(n,t)-r \choose k}}{{t\choose k}}  \geq \frac{1}{\epsilon }\cdot \left(\underline{p^{\prime}_y} - \left(1- \frac{{\max(n,t)-r \choose k}}{{n \choose k}}\right)\right)
\end{align}
}
where $t$ is the number of points in $T^*$ and $\epsilon=\frac{{t \choose k}}{{n\choose k }}$. Let $\Delta_e \subseteq \Delta_E$ be the region that satisfies the following:
\begin{align}
\Delta_e \cap \Delta_y = \emptyset \text{ and } \Pr(\bm W \in \Delta_e) = \overline{p}^{\prime}_e.
\end{align}
Note that we can find the region $\Delta_e$ because we have $\underline{p^{\prime}_y}+\overline{p}^{\prime}_e \leq 1$. We let  $\mathcal{B}_e = \Delta_{T^*} \cup \Delta_e$.  Then, we can divide the the region $\Phi \setminus (\mathcal{A} \cap \mathcal{B}_e)$ into $c-2$ regions and we use $\mathcal{B}_i$ to denote each region, where $ i\in \{1,2,\cdots,c\}\setminus\{y,e\}$. In particular, each region $\mathcal{B}_i$ satisfies $\Pr(T \in \mathcal{B}_i) \leq \overline{p}^{\prime}_i$. We can find these regions since $\underline{p^{\prime}_y}+\sum_{i \neq y} \overline{p}^{\prime}_i \geq 1$. Then, we can construct the following point cloud classifier $f^*$:
\begin{align}
f^*(\varphi) = 
\begin{cases}
y, \text{ if } \varphi \in \mathcal{A}, \\
i, \text{ if } \varphi \in \mathcal{B}_i.
\end{cases}
\end{align}
Note that the point cloud classifier $f^*$ is well-defined in the space $\Phi$.
We have the following probabilities for our constructed point cloud classifier $f^*$:
\begin{align}
&\Pr(f^*(\bm W)=y) = \Pr(\bm W \in \mathcal{A}) = \underline{p^{\prime}_y},\\
&\Pr(f^*(\bm W)=e) = \Pr(\bm W \in \mathcal{B}_e) = \overline{p}^{\prime}_e,\\
&\Pr(f^*(\bm W)=i) = \Pr(\bm W \in \mathcal{B}_i) \leq \overline{p}^{\prime}_i,
\end{align}
where $i \in \{1,2,\dots,c\}\setminus\{y,e\}$. Note that the point cloud classifier $f^*$ is consistent with Equation~(\ref{probability_bound_main}). Moreover, we have the following:
\begin{align}
& \Pr(f^*(\bm Z)= e) \\
=& \Pr(\bm Z \in \mathcal{B}_e) \\
\label{tight_left}
= &  \frac{\overline{p}^{\prime}_e }{\epsilon } +1- \frac{{\max(n,t)-r \choose k}}{{t\choose k}} \\
\label{tight_right}
\geq & \frac{1}{\epsilon} \left( \underline{p^{\prime}_y} - \left(1-\frac{{\max(n,t)-r \choose k }}{{n \choose k }}\right) \right) \\
=& \Pr(\bm Z \in \mathcal{A}) \\ 
=&\Pr(f^*(\bm Z)=y),
\end{align}
where $\epsilon = \frac{{t \choose k}}{{n \choose k }}$. Note that we derive Equation (\ref{tight_right}) from (\ref{tight_left}) based on Equation~(\ref{tight_main}). Therefore, for $\forall r > r^*$, there exist a point cloud classifier $f^*$ which satisfies Equation~(\ref{probability_bound_main}) and a point cloud $T^{*}$ such that $g(T^{*}) \neq y$ or there exist ties.

\section{Acknowledgements}
We thank the anonymous reviewers for constructive reviews and comments. This work was partially supported by NSF grant No.1937786.

\begin{figure*}[!t]
 \vspace{-2mm}
\centering
\subfloat[Point perturbation attacks. ]{\includegraphics[width=0.24\textwidth]{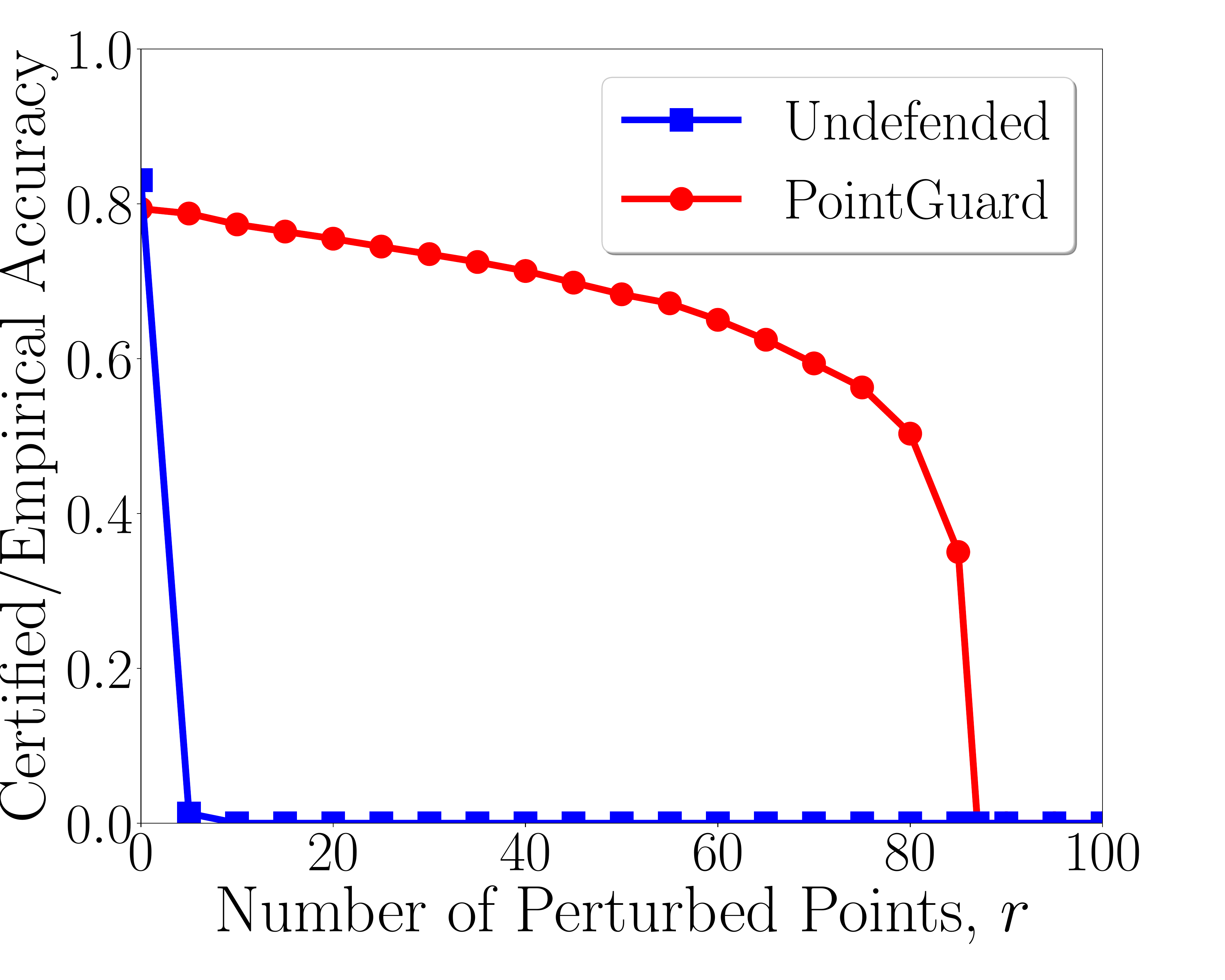}\label{compare_perturbation_attack}}
\subfloat[Point modification attacks.]{\includegraphics[width=0.24\textwidth]{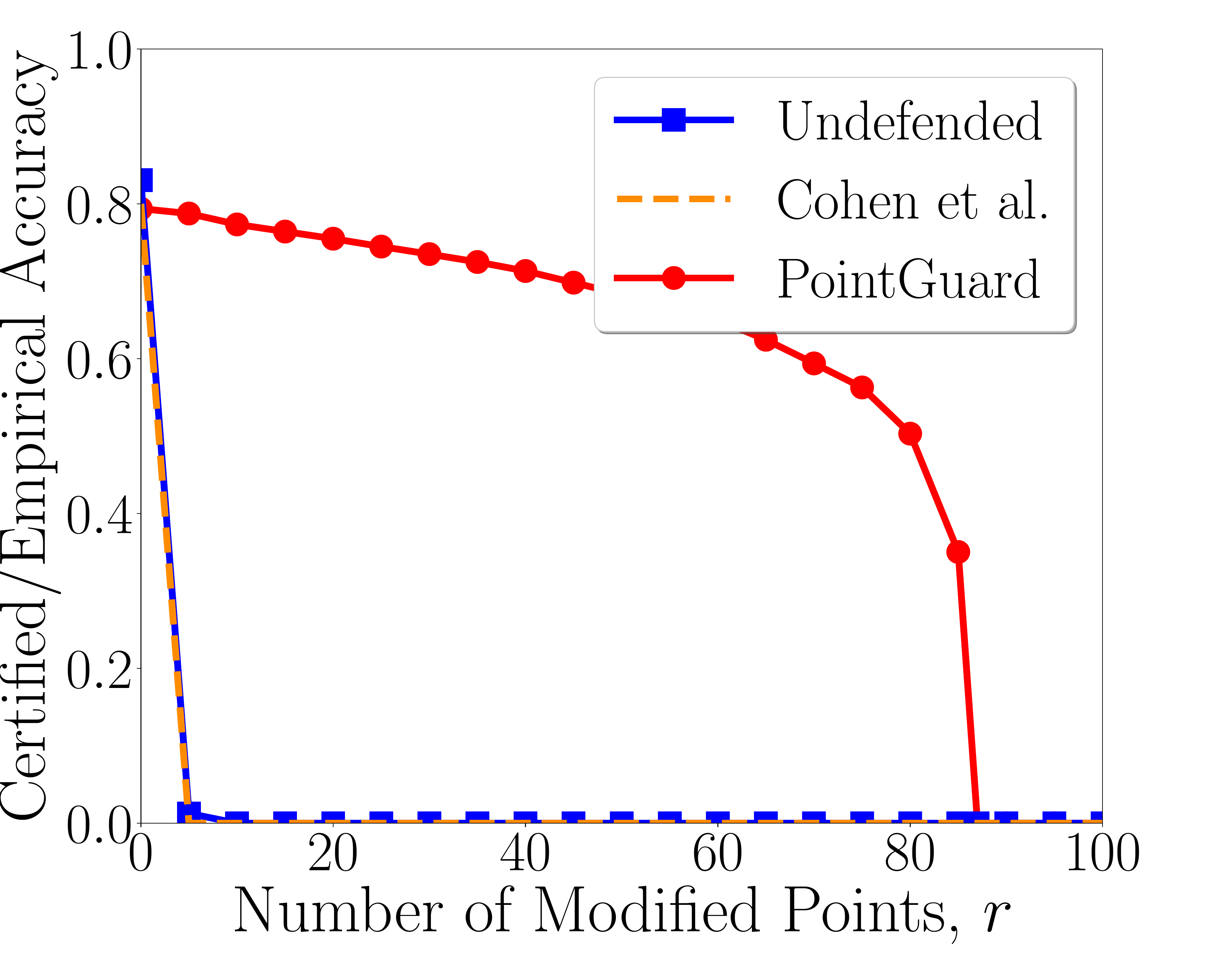}\label{compare_modification_attack} }
\subfloat[Point addition attacks.]{\includegraphics[width=0.24\textwidth]{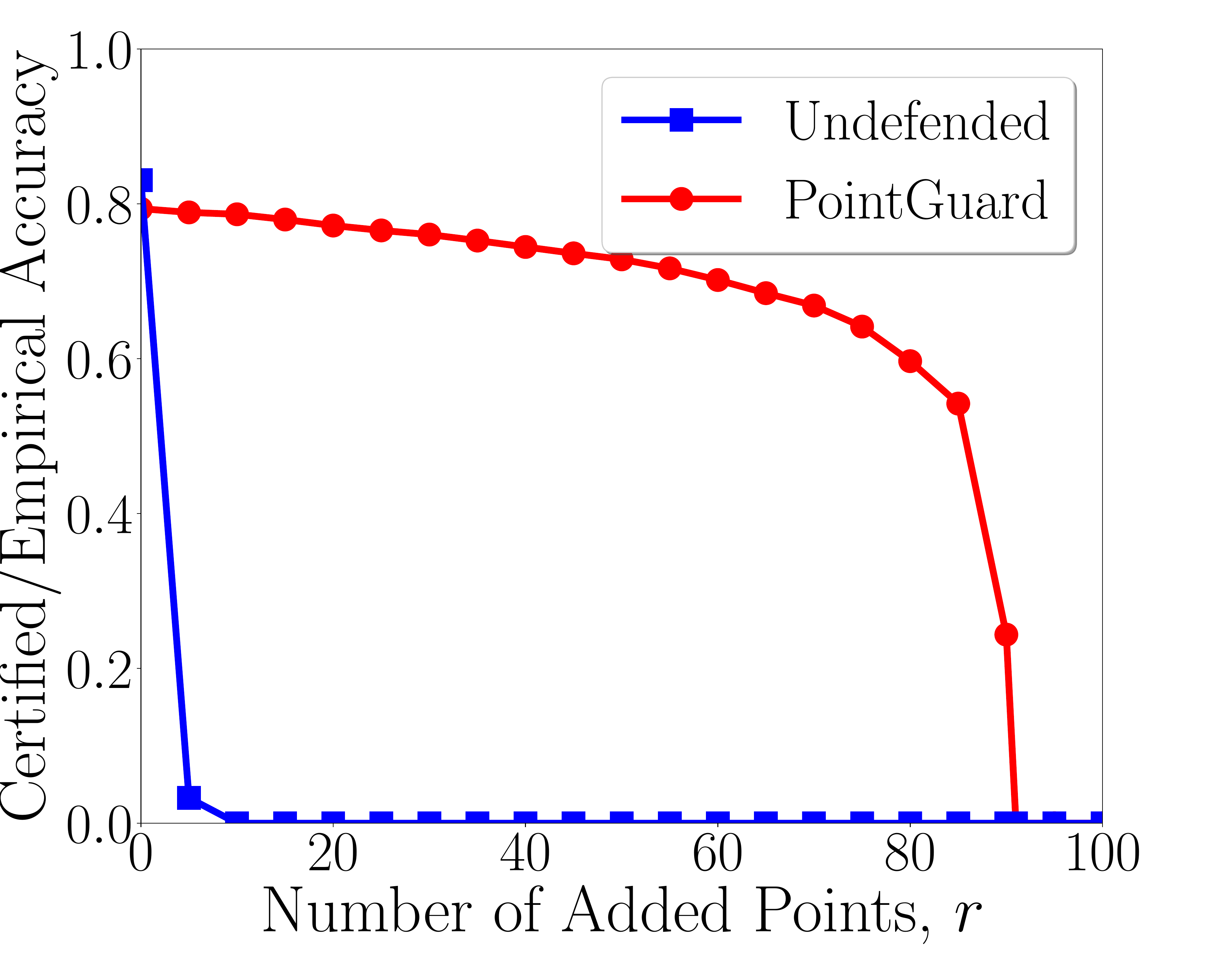}\label{compare_addition_attack}}
\subfloat[Point deletion attacks.]{\includegraphics[width=0.24\textwidth]{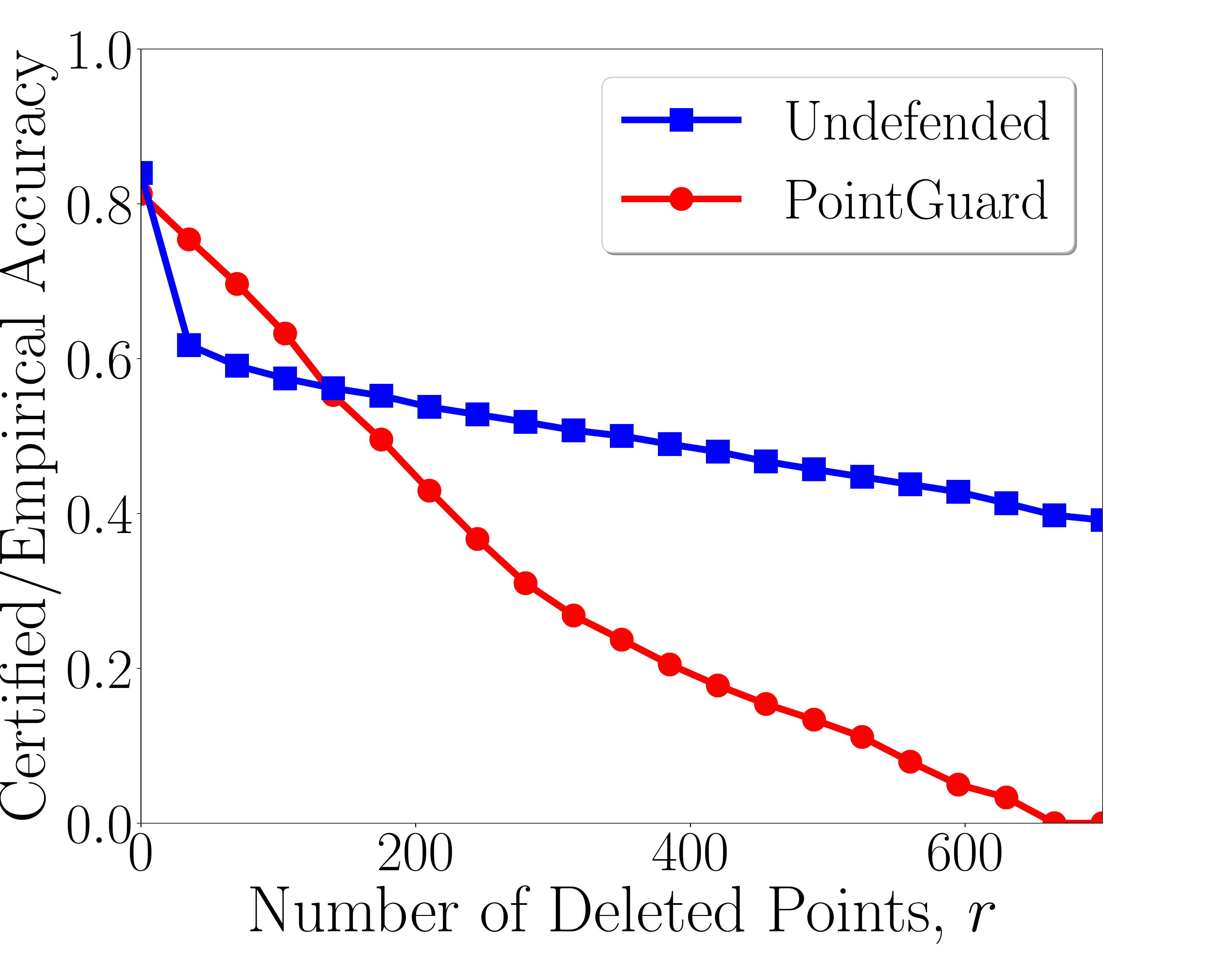}\label{compare_deletion_attack}}
\caption{Comparing different methods under different attacks on  ScanNet.}
\label{fig:scannet_1}
\end{figure*}

\begin{figure*}[!t]
\centering
\subfloat[Training with vs w/o subsampling]{\includegraphics[width=0.24\textwidth]{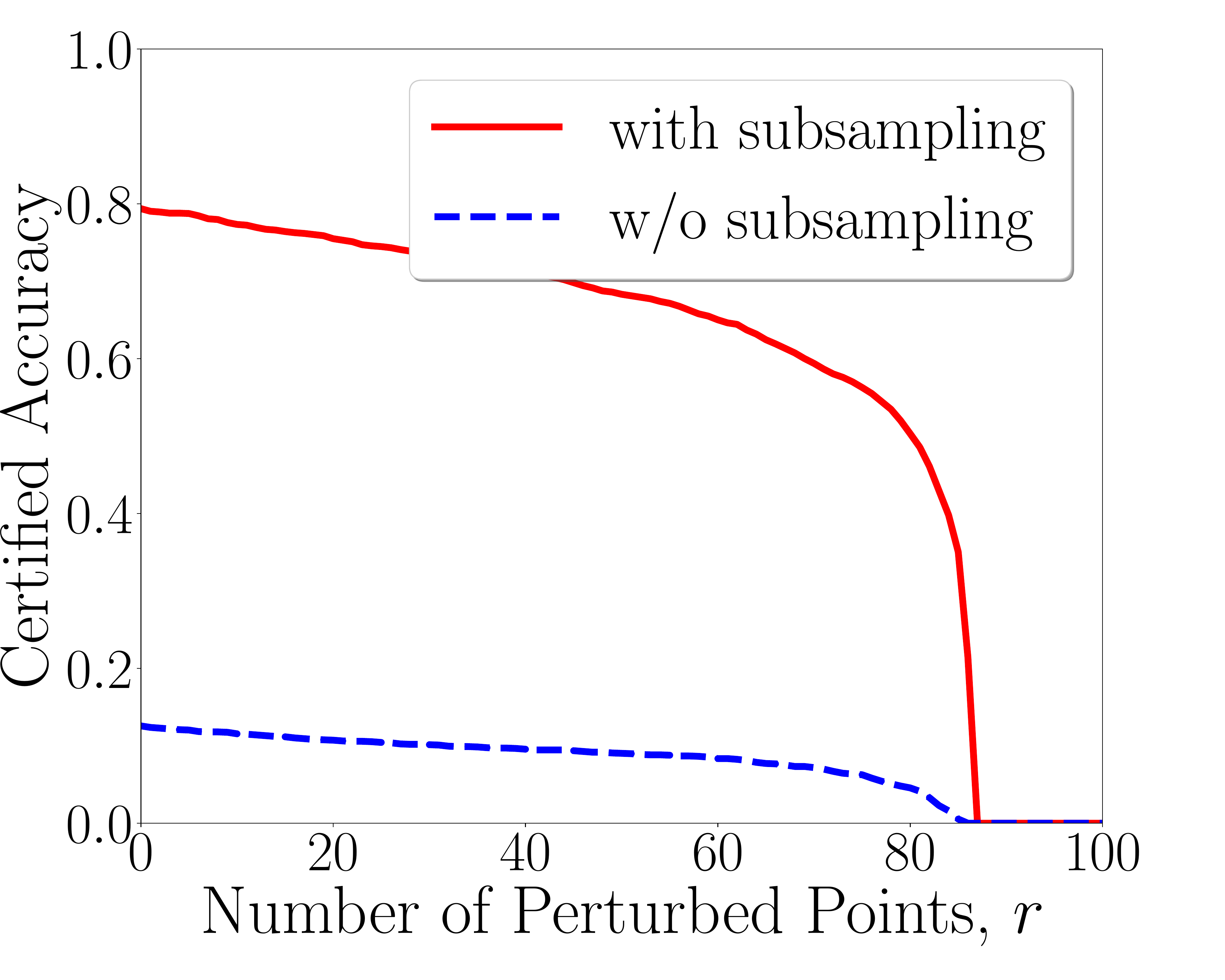}\label{impact_of_subsampling_1}}
\subfloat[Impact of $k$]{\includegraphics[width=0.24\textwidth]{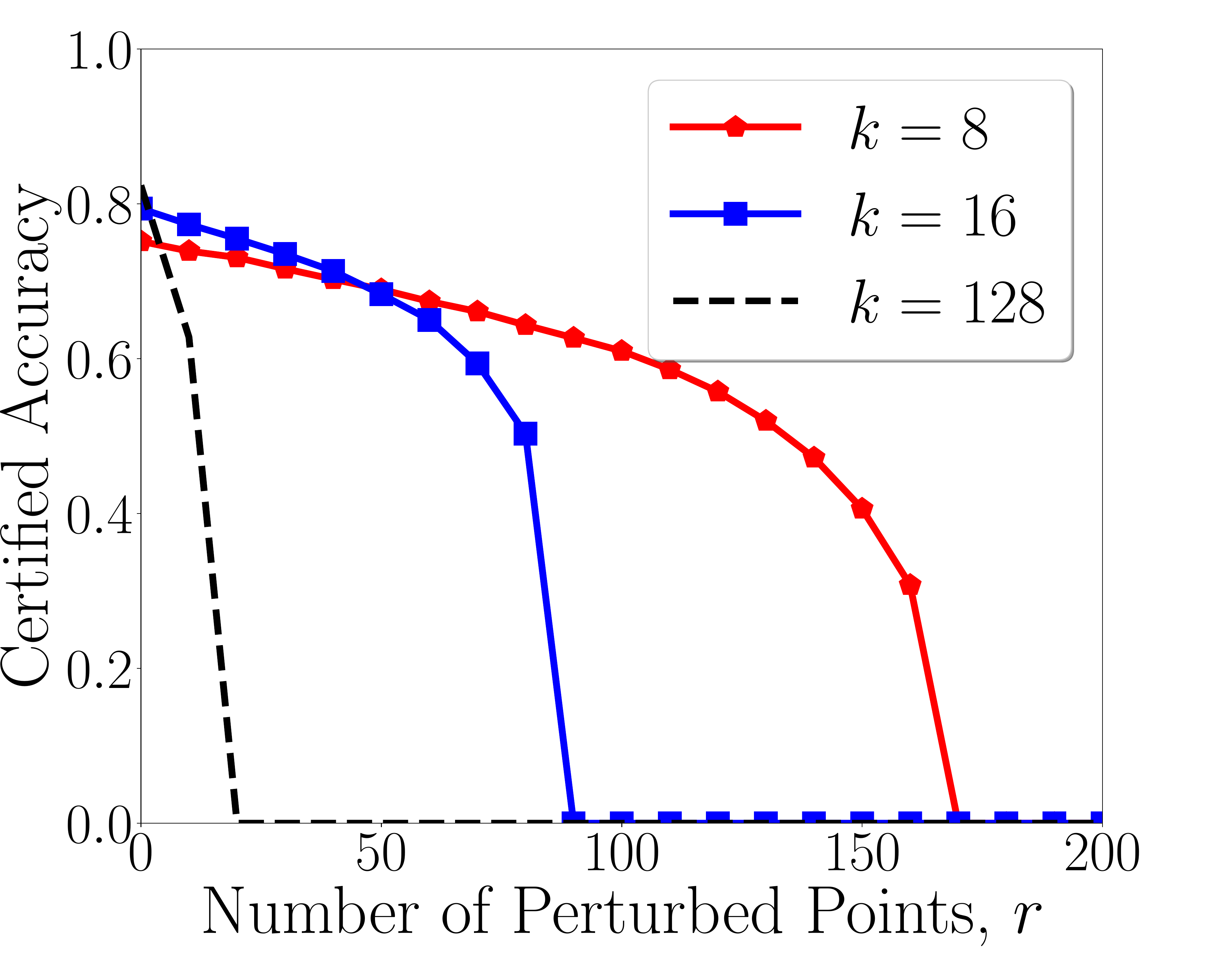}\label{impact_of_k_1}}
\subfloat[Impact of $\alpha$]{\includegraphics[width=0.24\textwidth]{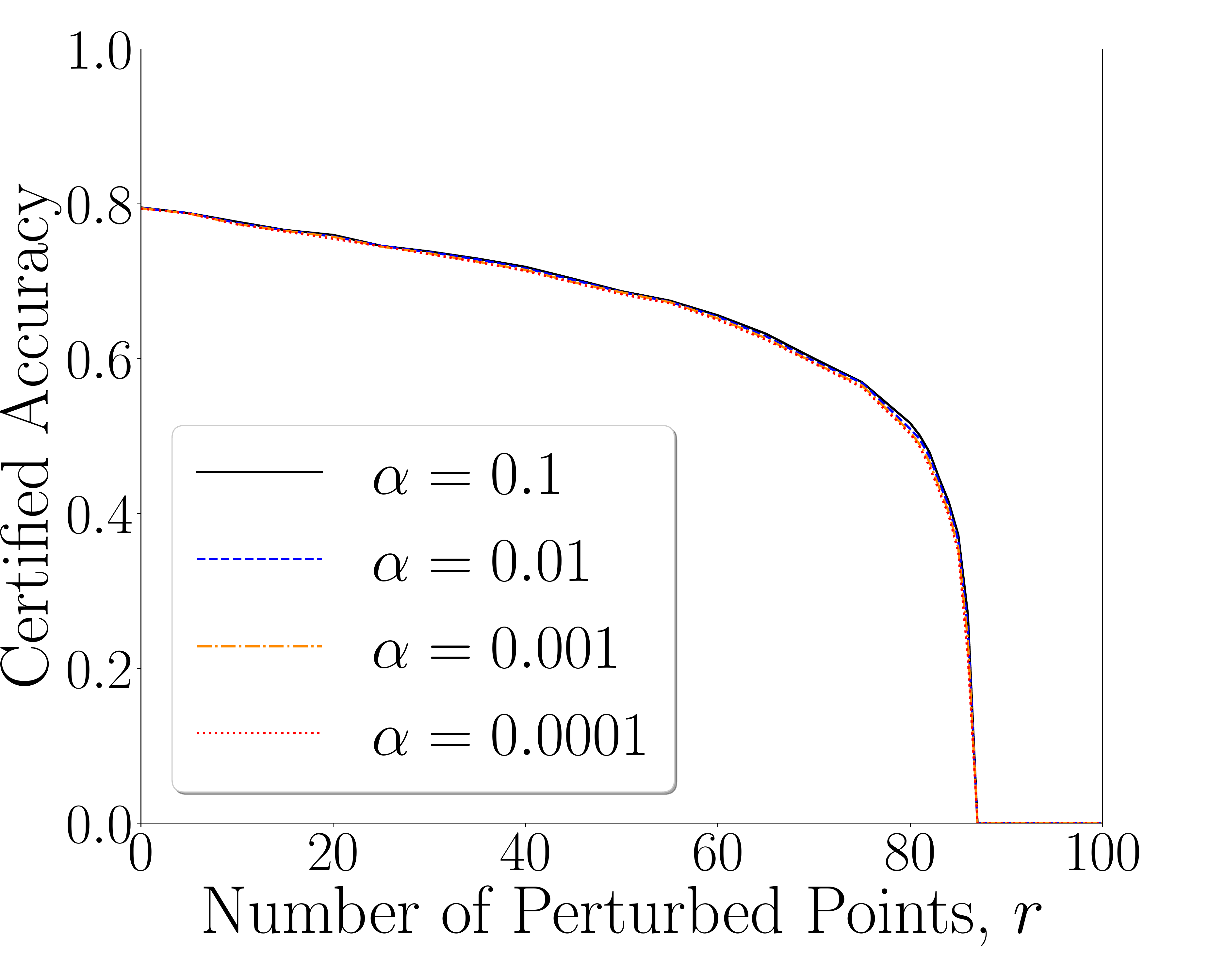}\label{impact_of_alpha_1}}
\subfloat[Impact of $N$]{\includegraphics[width=0.24\textwidth]{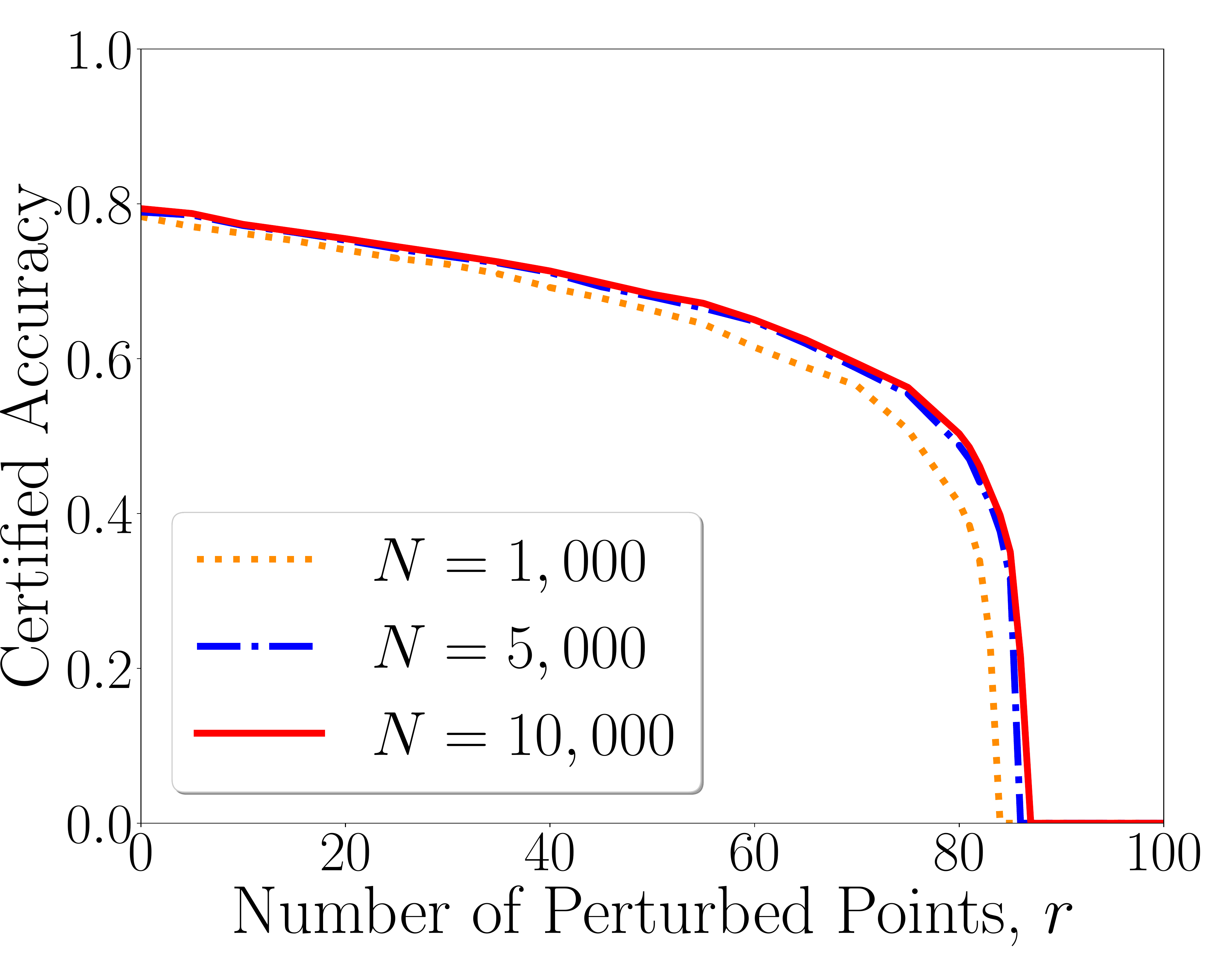}\label{impact_of_N_1}}

\caption{(a) Training the point cloud classifier with vs. without subsampling. (b), (c), and (d) show the impact of $k$, $\alpha$, and $N$, respectively. The dataset is ScanNet.}
\label{fig:scannet_2}
\end{figure*}

\end{document}